\newcommand{\ourname}{$\text{T$^2$ARec}$}
\newcommand{\ourblockname}{$\text{T$^2$A-Mamba}$}
\newcommand{\ourssmname}{$\text{Align}^2\text{-SSM}$}
\newcommand{\oursslname}{$\text{interest state}$}
\begin{document}

\title{A Test-Time Training Approach for Alleviating User Interest Shifts in Sequential Recommendation}
\title{Test-Time Alignment for Tracking User Interest Shifts in Sequential Recommendation}

\author{Changshuo Zhang}
\affiliation{%
  \institution{Gaoling School of AI,\\ Renmin University of
China}
  \city{Beijing}
  \country{China}
}
\email{lyingcs@ruc.edu.cn}

\author{Xiao Zhang}
\authornote{Xiao Zhang is the corresponding author.}
\author{Teng Shi}
\affiliation{%
  \institution{Gaoling School of AI,\\ Renmin University of
China}
    \city{Beijing}
  \country{China}
  }
\email{zhangx89@ruc.edu.cn}
\email{shiteng@ruc.edu.cn}

\author{Jun Xu}
\author{Ji-Rong Wen}
\affiliation{%
  \institution{Gaoling School of AI,\\ Renmin University of
China}
  \city{Beijing}
  \country{China}}
\email{junxu@ruc.edu.cn}
\email{jrwen@ruc.edu.cn}




\renewcommand{\shortauthors}{Changshuo Zhang et al.}

\begin{abstract}
Sequential recommendation is essential in modern recommender systems, aiming to predict the next item a user may interact with based on their historical behaviors. However, real-world scenarios are often dynamic and subject to shifts in user interests. 
Conventional sequential recommendation models are typically trained on static historical data, limiting their ability to adapt to such shifts and resulting in significant performance degradation during testing. 
Recently, Test-Time Training (TTT) has emerged as a promising paradigm, enabling pre-trained models to dynamically adapt to test data by leveraging unlabeled examples during testing. However, applying TTT to effectively track and address user interest shifts in recommender systems remains an open and challenging problem. Key challenges include how to capture temporal information effectively and explicitly identifying shifts in user interests during the testing phase. 
To address these issues, we propose \textbf{\ourname}, a novel model leveraging state space model for TTT by introducing two \textbf{T}est-\textbf{T}ime \textbf{A}lignment modules tailored for sequential recommendation, effectively capturing the distribution shifts in user interest patterns over time. Specifically, \textbf{\ourname} aligns absolute time intervals with model-adaptive learning intervals to capture temporal dynamics and introduce an \oursslname\ alignment mechanism to effectively and explicitly identify the user interest shifts with theoretical guarantees. These two alignment modules enable efficient and incremental updates to model parameters in a self-supervised manner during testing, enhancing predictions for online recommendation. Extensive evaluations on three benchmark datasets demonstrate that \ourname\ achieves state-of-the-art performance and robustly mitigates the challenges posed by user interest shifts. 
\end{abstract}


\begin{CCSXML}
<ccs2012>
<concept>
<concept_id>10002951.10003317.10003347.10003350</concept_id>
<concept_desc>Information systems~Recommender systems</concept_desc>
<concept_significance>500</concept_significance>
</concept>
</ccs2012>
\end{CCSXML}

\ccsdesc[500]{Information systems~Recommender systems}

\keywords{Test-Time Alignment, Sequential Recommendation,  User Interest Shifts, State Space Model}

\maketitle

\section{Introduction}

\begin{figure}[t]
\centering
\includegraphics[width=0.40 \textwidth]{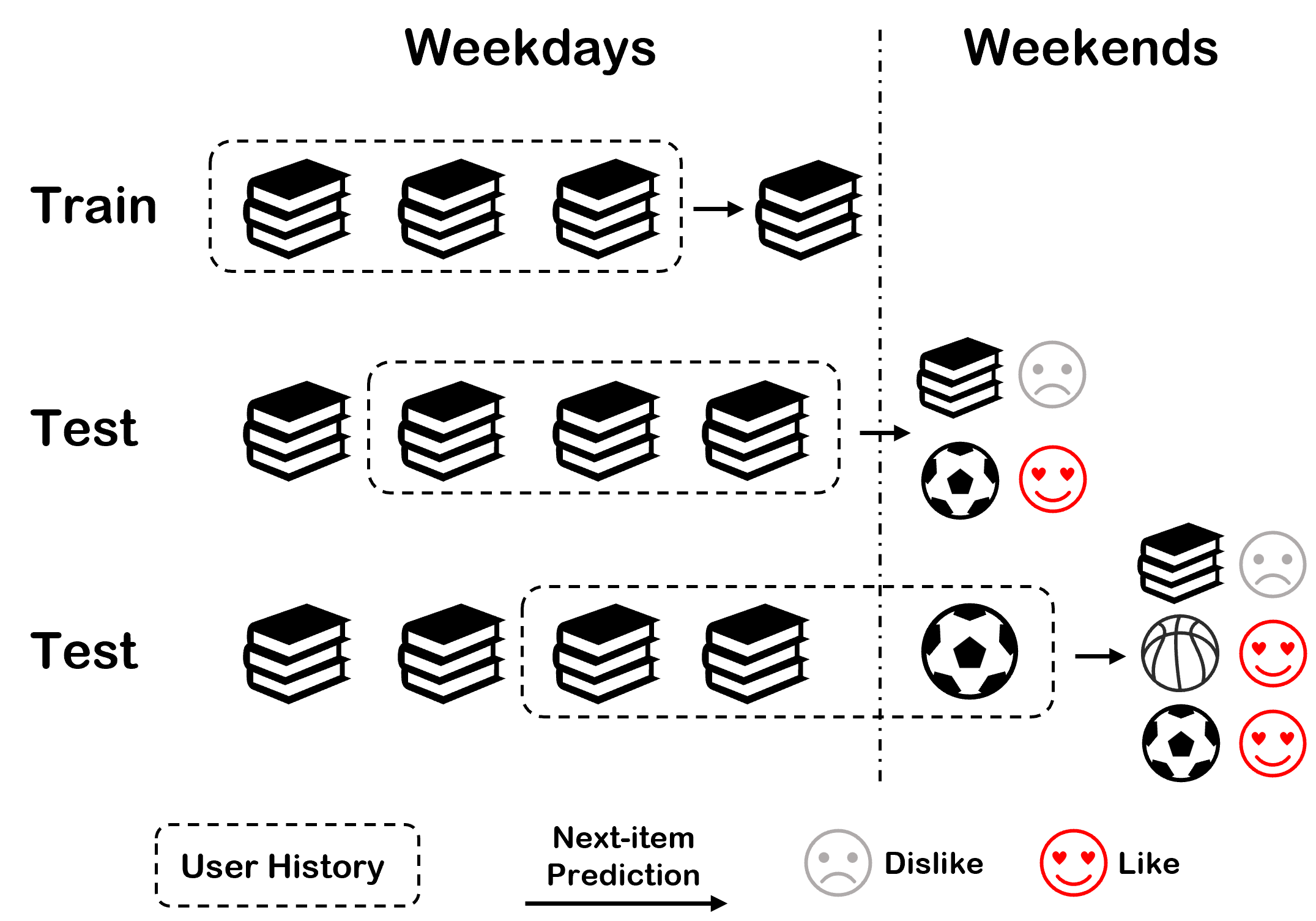}
\caption{Illustrates of user interest shifts between train and testing phases in sequential recommendation. During the training phase (weekdays), the user’s historical behavior is focused on study-related content (e.g., books), and the model learns to predict the next item based on this pattern. In the testing phase (weekends), user interest shifts from study-related items to leisure activities (e.g., sports). Although the Test input includes the ground truth from the training phase, the model continues to recommend study-related content, failing to adapt to the user’s new preferences. This highlights the importance of modeling temporal contexts and handling user interest shifts effectively in recommendations.}
\label{intro}
\end{figure}

Sequential recommendations aim to predict the next item a user will interact with by modeling dependencies within their historical interaction data~\cite{tang2018personalized, hidasi2015session, sun2019bert4rec, kang2018self,liu2024mamba4rec, xie2022contrastive, chang2021sequential, zhang2024reinforcing, fang2020deep, dai2024recode, zhang2024saqrec}. However, real-world scenarios often present the challenge of distribution shifts, where user behavior patterns and data distributions evolve dynamically over time. 
For instance, consider a sequential recommendation where a user’s interaction history (as shown in Figure~\ref{intro}) during the training phase (weekdays) exhibits a strong preference for study-related items, such as books, reflecting their focus on work or learning. However, during the testing phase (weekends), the user’s interest shifts towards sports-related items, such as football or basketball equipment, as they transition to leisure activities. If the recommendation fails to adapt to this shift and continues to recommend study-related items based on the training phase, it will not align with the user’s weekend preferences. 

Existing sequential recommendation models, typically trained on static historical data and have their model parameters fixed during online deployment, face challenges in adapting to shifts in user interest patterns. This limitation often leads to significant performance degradation during test time. 
To validate this phenomenon, we conducted experiments on two datasets (ML-1M~\cite{harper2015movielens} and Amazon Prime Pantry~\cite{ni2019justifying}) using two models (SASRec~\cite{kang2018self} and Mamba4Rec~\cite{liu2024mamba4rec}) follow the implementation settings in Recbole~\cite{zhao2021recbole}. After training the models on the training set, the testing set was evenly divided into four segments based on timestamps, and NDCG@$10$ was used as the evaluation metric. As shown in Figure~\ref{vali}, the later the timestamp of the segment, the more significant the user interest shift, resulting in poorer test performance metrics. This illustrates how user behavior evolves dynamically between the train and testing phases, driven by contextual factors such as time availability, necessitating adaptive models to handle such user interest shifts effectively. 

Motivated by the recent promising paradigm of Test-Time Training (TTT)~\cite{liu2021ttt++, jin2022empowering, wang2022test, sun2020test, liang2024comprehensive}, which enables pre-trained models to dynamically adapt to test data by leveraging unlabeled examples during inference, we focus on applying TTT to track user interest shifts in sequential recommendation. 
While TTT facilitates self-adaptation and effectively addresses evolving distribution shifts in an online manner, its application to tracking user interest shifts in recommender systems remains an open and challenging problem. We identify two key challenges associated with this task: 
first, \emph{capturing temporal information}, as user behavior are often influenced by periodic patterns or trending topics, making it crucial to understand the impact of historical events on predictions at specific future time points; 
second, \emph{dynamically and efficiently adjusting the representation of user interest patterns}, since even if a user’s historical behavioral features remain stable, their interest patterns may evolve dynamically. The model needs to identify and adapt to these changes to provide  accurate recommendations during testing. 




\begin{figure}[t]
\centering
\includegraphics[width=0.45 \textwidth]{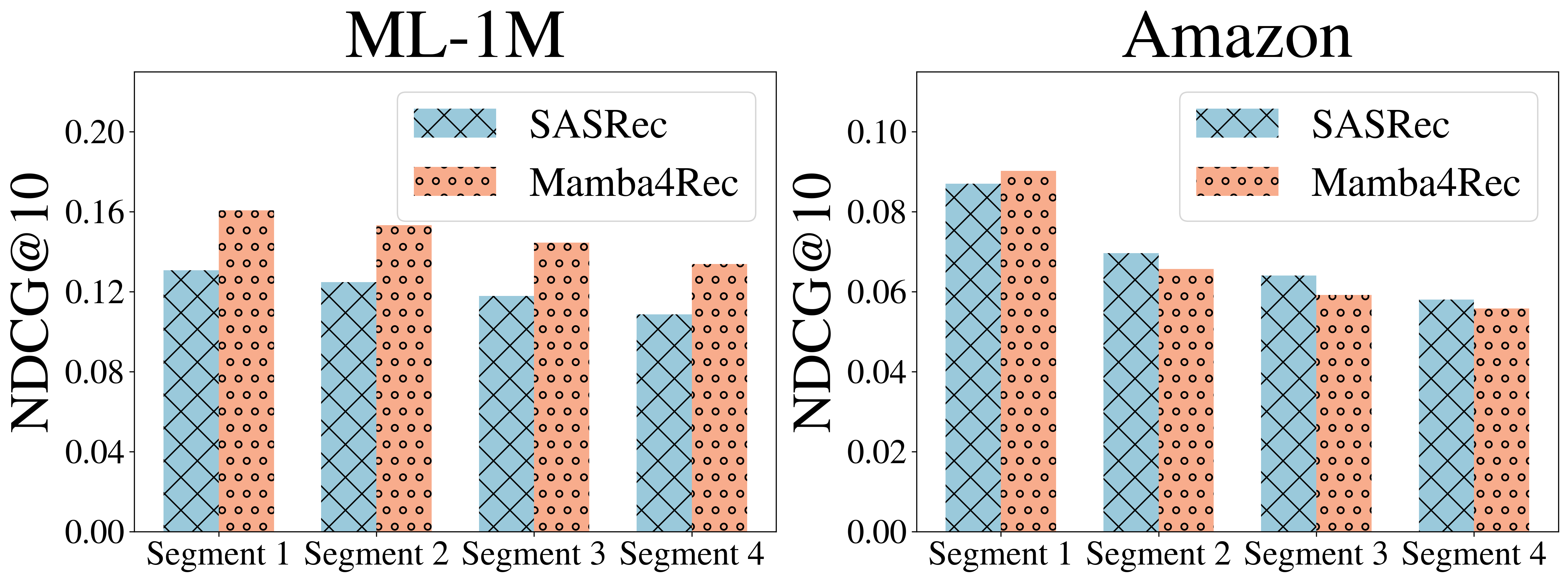}
\caption{Validation of user interest shifts during the testing phase. We conducted experiments on two datasets (ML-1M and Amazon Prime Pantry) using two backbone models (SASRec and Mamba4Rec). After training the models on the training set, the testing set was evenly divided into four segments based on timestamps, and NDCG@10 was used as the evaluation metric for analysis and comparison.}
\label{vali}
\end{figure}



To address the above challenges of tracking user interest shifts in sequential recommendation tasks, we propose \ourname, a sequential model that integrates test-time training. \ourname\ adopts state space models~(SSMs)~\cite{gu2023mamba, dao2024transformers} as the backbone, which effectively models user interest state transitions by handling historical interactions and resolves test-time throughput issues. Then we introduce two alignment-based self-supervised losses to adaptively capture the user interest shifts. The time interval alignment loss computes differences between interaction intervals in the sequence and the target prediction time, aligning these intervals with adaptive time steps to effectively capture temporal dynamics. The \oursslname\ alignment loss models the dynamic evolution of user interest patterns by transforming the input sequence into a final state, applying forward and backward state updates to generate a reconstructed state, and aligning it with the original state for precise pattern modeling. During testing, \ourname\ applies gradient descent on test data to adjust model parameters in real time, enabling accurate next-item predictions under distribution shifts. This design allows \ourname\ to effectively capture temporal dynamics and adapt to evolving user interest patterns, ensuring robust and efficient performance in sequential recommendation tasks at the test time.


Our main contributions are as follows:
\begin{itemize}[leftmargin=*]
    \item Identification of a key issue in applying Test-Time Training to sequential recommendations: The overlooked shift in the user interest pattern at the test time.
    \item Introduction of a novel approach: We propose \ourname, a TTT-based sequential recommendation model incorporating two test-time alignment-based losses in a state space model to capture temporal dynamics and evolving user interest patterns, along with real-time parameter adjustment during testing to track user interest shifts and ensure robust performance.
    \item Extensive experiments: We conduct experiments on three widely used datasets, demonstrating the effectiveness of \ourname. Further ablation studies and analysis explain the superiority of our designed modules.
\end{itemize}

\section{Related Work}
\subsection{Sequential Recommendation}
Sequential recommendations have evolved from traditional models, like Markov Chains~\cite{he2016fusing}, to deep learning approaches. Early RNN-based models (e.g., GRU4Rec~\cite{hidasi2015session}, HRNN~\cite{quadrana2017personalizing}) addressed long-range dependencies, leveraging hidden states to capture dynamic user preferences. Recently, Transformer-based architectures, such as SASRec~\cite{kang2018self} and BERT4Rec~\cite{sun2019bert4rec}, have gained prominence with self-attention mechanisms that model complex interactions and enable efficient parallel computation. Alternative methods, like MLP-based models (e.g., FMLP-Rec~\cite{zhou2022filter}), offer a balance between accuracy and efficiency. Innovations like Mamba4Rec\cite{liu2024mamba4rec} further enhance performance on long interaction sequences, reflecting the ongoing advancements in this field.
\subsection{Test-Time Training}
Test-Time Training (TTT)~\cite{liu2021ttt++, jin2022empowering, wang2022test} improves model generalization by enabling partial adaptation during the testing phase to address distribution shifts between training and testing datasets. It leverages self-supervised learning (SSL)~\cite{liu2022graph, schiappa2023self, yang2024mixed, liu2021self} tasks to optimize a supervised loss (e.g., cross-entropy) and an auxiliary task loss during training.
The auxiliary task (e.g., rotation prediction) allows the model to align test-time features closer to the source domain. Simplified approaches like Tent~\cite{wang2020tent} avoid supervised loss optimization during testing, while advanced methods such as TTT++~\cite{liu2021ttt++} and TTT-MAE~\cite{he2022masked} employ techniques like contrastive learning and masked autoencoding to enhance adaptation. Unsupervised extensions, such as ClusT3~\cite{hakim2023clust3}, use clustering with mutual information maximization but face limitations due to hyperparameter sensitivity. Test-Time Training (TTT) has been applied to out-of-distribution (OOD) tasks, such as recommendations~\cite{yang2024dual, yang2024ttt4rec, wang2024not}. DT3OR introduces a model adaptation mechanism during the test phase, specifically designed to adapt to the shifting user and item features~\cite{yang2024dual}. TTT4Rec leverages SSL in an inner-loop, enabling real-time model adaptation for sequential recommendations~\cite{yang2024ttt4rec}. LAST constructs simulated feedback through an evaluator during testing, updating model parameters and achieving online benefits~\cite{wang2024not}. However, existing work  has not effectively captured temporal information or explicitly identified shifts in user interests during the testing phase.

\section{Preliminaries}
\subsection{Problem Statement}\label{seqrec}
In sequential recommendation, let \( \mathcal{U} = \{ u_1, u_2, \dots, u_{|\mathcal{U}|} \} \) denote the user set, \( \mathcal{V} = \{ v_1, v_2, \dots, v_{|\mathcal{V}|} \} \) denote the item set, and \( \mathcal{S}_u = [ v_1, v_2, \dots, v_{n_u} ] \) denote the chronologically ordered interaction sequence for user \( u \in \mathcal{U} \) with the corresponding timestamp sequence \([ t_1, t_2, \dots, t_{n_u} ] \), where \( n_u \) is the length of the sequence, \( v_i \) is the \( i \)-th item interacted with by user \( u \), and \( t_i \) is the timestamp of the interaction. Given the interaction history \(\mathcal{S}_u \) and the timestamp of prediction \( t_{n_u+1} \), the task is to predict the next interacted item \( v_{n_u+1} \), i.e., the item that the user \( u \) is most likely to interact with at timestamp \( t_{n_u+1} \). This can be formalized as learning a function:
\begin{equation}
f_{t_{n_u+1}} :~~\mathcal{S}_u \rightarrow v_{n_u+1},    
\end{equation}
where \( f_{t_{n_u+1}} \) maps the historical sequence \( \mathcal{S}_u \) at timestamp \( t_{n_u+1} \) to the next likely item \( v_{n_u+1} \) from the item set \( \mathcal{V} \). In the following sections, we omit the subscript \(u\) in \(n_u\) and \(\mathcal{S}_u\) and directly use \(n\) 
 and \(\mathcal{S}\) for convenience. 

\subsection{State Space Models (SSMs)}\label{mamba}
SSMs perform well in long-sequence modeling~\cite{liu2024mamba4rec}, image generation~\cite{xu2024survey}, and reinforcement learning~\cite{ota2024decision}, providing efficient autoregressive inference like RNN~\cite{lipton2015critical} while processing input sequences in parallel like Transformers~\cite{vaswani2017attention}. This dual functionality enables efficient training and robust performance in applications such as time series analysis~\cite{rangapuram2018deep} and audio generation~\cite{goel2022s}.

The original SSMs originated as continuous-time maps on functions from $d$-dimensional input \(\bm x(t) \in \mathbb{R}^{d}\) to output \(\bm y(t) \in \mathbb{R}^{d}\) at current time $t$ through a $d_{\mathrm{s}}$-dimensional hidden state \(\bm h(t) \in \mathbb{R}^{d_{\mathrm{s}}}\). These models leverage the dynamics described below:
\begin{subequations}
\begin{align}
\bm h^{\prime}(t) &= \bm{A} \bm{h}(t) + \bm{B} \bm x(t), \label{ssm1} \\
\bm y(t) &= \bm{C}^\top \bm{h}(t), \label{ssm2}
\end{align}
\end{subequations} 
where \(\bm A \in \mathbb{R}^{d_{\mathrm{s}} \times d_{\mathrm{s}}}\) and \(\bm B,\bm C\in \mathbb{R}^{d_{\mathrm{s}} \times d}\) are adjustable matrices, $\bm h'(t)$ denotes the derivative of $\bm h(t)$. To enable effective representation of discrete data, Structured SSMs~\cite{gu2021efficiently} employ the Zero-Order Hold (ZOH)~\cite{gu2021combining} method for data discretization from the input sequence $\bm X = [\bm x_1, \bm x_2,\ldots,\bm x_n]^\top\in \mathbb{R}^{n\times d}$ to output sequence $\bm Y = [\bm y_1, \bm y_2,\ldots,\bm y_n]^\top \in \mathbb{R}^{n\times d}$ through hidden state $\bm H = [\bm h_1, \bm h_2,\ldots,\bm h_n]^\top  \in \mathbb{R}^{n\times d_{\mathrm{s}}}$ based on a specified step size $\Delta \in \mathbb{R}$.
The introduction of the Mamba~\cite{gu2023mamba} model significantly enhances SSMs by dynamically adjusting the matrices \(\bm B\in \mathbb{R}^{n\times d_{\mathrm{s}}}\), \(\bm C\in \mathbb{R}^{n\times d_{\mathrm{s}}}\) and the step size now represented as \(\bm\Delta\in \mathbb{R}^{n\times d}\) that dynamically depends on inputs varying over time. 
Its latest version, Mamba-2~\cite{dao2024transformers}, links structured state space models with attention mechanisms. 
Mamba-2 refines \(\bm A\) into a scalar value $A\in \mathbb{R}$ and set $\bm \Delta \in \mathbb{R}^{n}$, creating a new state space duality (SSD) framework with multi-head patterns akin to Transformers. 
This innovation boosts training speed and increases state size, optimizing expressiveness for greater efficiency and scalability, making it a strong contender against Transformers.

\begin{figure}[t]
\centering
\includegraphics[width=0.4 \textwidth]{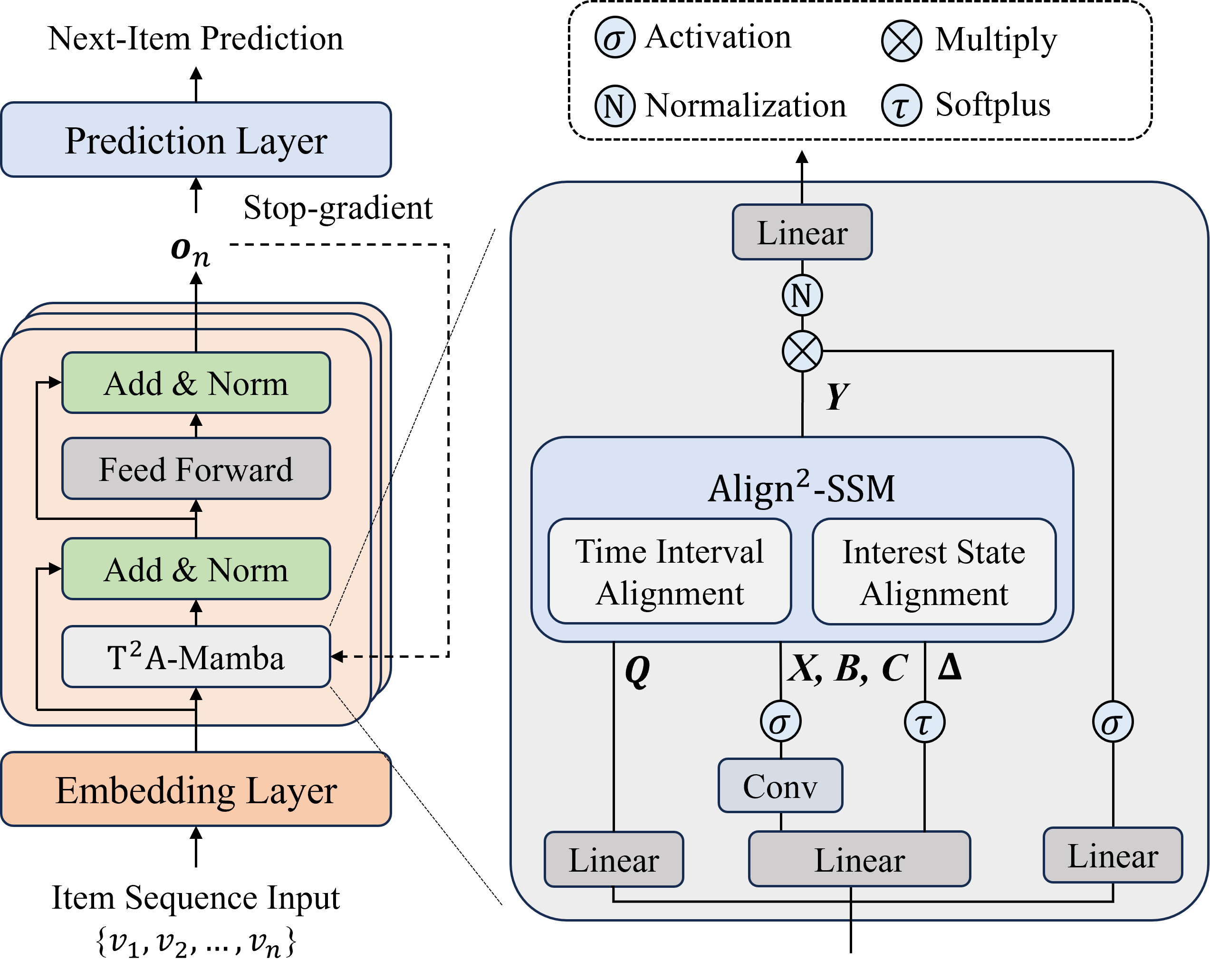}
\caption{Overall framework of \ourname: \ourname\ processes input sequences through an embedding layer, followed by the \ourblockname\ block and \ourssmname\ block for state updates and output generation. The prediction layer using output embedding $\bm{o}_n$ generated from the feed forward network layer for next-item predictions. $\bm{o}_n$ is reintroduced into the \ourblockname\ block to compute the alignment losses.}
\label{main_model1}
\end{figure}

\begin{figure*}[t]
\centering
\includegraphics[width=0.9 \textwidth]{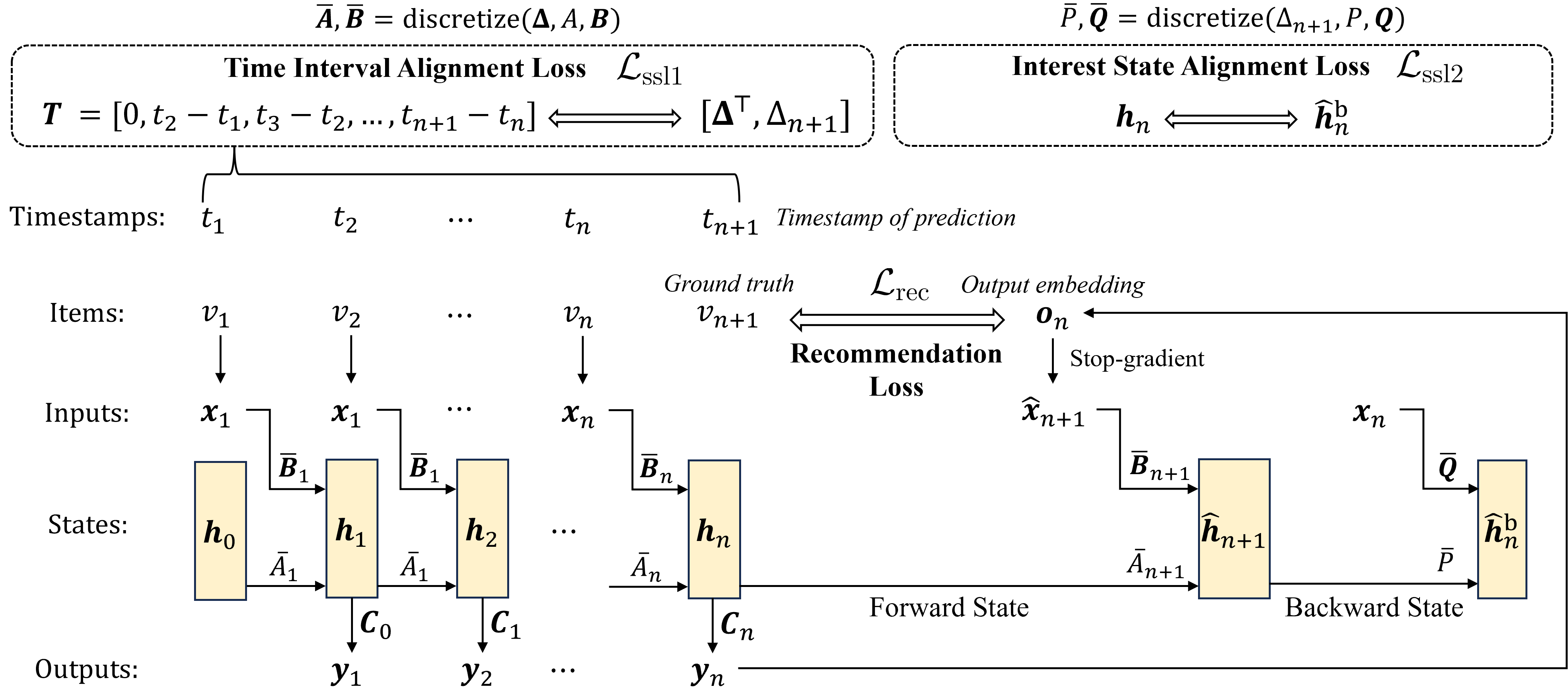}
\caption{The logits and losses computation in \ourssmname: The left side illustrates the time interval alignment loss (\(\mathcal{L}_{\mathrm{time}}\)), which aligns the predicted time intervals \(\bm\Delta\) with the ground truth \(\bm T\). The right side shows the \oursslname\ alignment loss (\(\mathcal{L}_{\mathrm{state}}\)), aligning the final state \(\bm h_{n}\) with the backward state \(\hat{\bm h}^{\rm b}_{n}\). These two losses jointly enhance the model’s robustness and effectiveness in handling user interest shifts. During testing, the model leverages these self-supervised losses to perform gradient descent, adapting to the input data and improving prediction performance.}
\label{main_model2}
\end{figure*}
\section{\ourname: The Proposed Method}
To track shifts in user interest during the testing phase of sequential recommendation tasks, we propose \ourname, a method that integrates a state space model for test-time training and introduces two alignment modules to adaptively capture these shifts during testing. First, we introduce the base model of the proposed method, which is developed based on the Mamba-2 architecture and can dynamically represent user interest with high test-time throughput. Next, we describe the two alignment modules, designed for time intervals and user interest states, respectively. Finally, we explain how to conduct self-supervised training during the testing phase.

\subsection{Base Model}\label{T3:overall}
As illustrated in Figure~\ref{main_model1}, we develop the basic framework of \ourname\ by stacking the embedding layer, \ourblockname\ block, and the prediction layer. 

\subsubsection{Embedding Layer}\label{emb}
Given a learnable embedding layer \( \bm{E} = [\bm{e}_1,\bm{e}_2,\ldots, \bm{e}_{|\mathcal{V}|}]^\top \in \mathbb{R}^{|\mathcal{V}| \times d} \) for all items, where \( d \) is the embedding dimension and \( \mathbf{e}_j \) represents the dense vector for item \( v_j \), this layer transforms the sparse item id sequence \( \mathcal{S} = [ v_1, v_2, \dots, v_{n} ] \) into dense vector representations, denoted as $\bm E^{(1)}_{\mathcal{S}}\in \mathbb{R}^{n\times d}$.

\subsubsection{T$^3$A-Mamba Block}\label{t3amamba}
Though Transformer-based models excel in sequential recommendation~\cite{kang2018self}, their quadratic complexity related to sequence length impedes efficient test-time training and burdens throughput. To address this issue, we develop \ourblockname\ based on Mamba-2 \cite{dao2024transformers}. 

The output representations $\bm E_{\mathcal{S}}$ of the embedding layer then enter our core \ourblockname\ Block. The sequence undergoes a series of transformations to generate inputs for \ourssmname\ block. Specifically, $\bm E_{\mathcal{S}}$ first passes through a linear layer:
\begin{equation}
    \bm {E}^{(2)}_{\mathcal{S}}, \bm {E}^{(3)}_{\mathcal{S}} \leftarrow \text{Linear}_{1}(\bm E^{(1)}_{\mathcal{S}}), \label{linear}
\end{equation}
where $\bm E_{\mathcal{S}}^{(2)} \in \mathbb{R}^{n\times (d+2\times d_{\rm s})}$, $\bm {E}^{(3)}_{\mathcal{S}} \in \mathbb{R}^{n}$ and Linear$_1$ is a linear parameterized projection from dimension $d$ to dimension $(d+2\times d_s+1)$.
Next, $\bm E_{\mathcal{S}}^{(2)}$ passes through a convolution and a non-linear activation, $\bm {E}^{(3)}_{\mathcal{S}}$ passes through a softplus function:
\begin{equation}
    \bm{X}, \bm B, \bm C \leftarrow \sigma\left(\text{Conv}\left(\bm E_{\mathcal{S}}^{(2)}\right)\right), \quad \bm \Delta=\tau\left(\bm {E}^{(3)}_{\mathcal{S}}\right)\label{conv:TTAA}
\end{equation}
where $\bm X  = [\bm x_1, \bm x_2, \ldots, \bm x_n]^\top\in \mathbb{R}^{n\times d}$ is the input of \ourssmname, $\bm B \in \mathbb{R}^{n\times d_{\rm s}}$ and $\bm C \in \mathbb{R}^{n\times d_{\rm s}}$ are the adjustable matrices, $\bm \Delta \in \mathbb{R}^{n}_{+}$ is the step size ensured to be positive via the softplus function $\tau$, Conv is the convolution operation, $\sigma$ is the non-linear activation. 

The overall transformation process of Equation~\eqref{linear} and \eqref{conv:TTAA} can be unified into a sequence of mappings from the input embeddings $\bm E_{\mathcal{S}}$ to the outputs $\bm X$, ${\bm B}$, $\bm C$, and the step size $\bm \Delta$, denotes as:
\begin{equation}
\label{transform}
    \bm X, {\bm B}, \bm C,\bm \Delta \leftarrow \text{Transform}(\bm E_{\mathcal{S}}^{(1)}).
\end{equation}

Then, we introduce the core ingredient of \ourblockname, the \ourssmname.
Given a learnable scalar value \( A <0 \), we compute the discretized \( \bar{\bm{A}} =[\bar{A}_1, \bar{A}_2, \ldots, \bar{A}_n]^\top \in \mathbb{R}^{n}\) and \( \bar{\bm{B}} =[\bar{\bm B}_1, \bar{\bm B}_2, \ldots, \bar{\bm B}_n]^\top\in \mathbb{R}^{n\times d_{\rm s}}\) follows the Zero-Order Hold (ZOH)~\cite{gu2021combining} method, formally:
\begin{equation}
    \bar{\bm A}=e^{\bm\Delta A},\quad\bar{\bm B}=\text{diag}(\bm\Delta) \bm B,
    \label{discretize}
\end{equation}
simplified as $\bar{\bm A}, \bar{\bm B} \leftarrow \text{discretize}(\bm\Delta, A, \bm B)$, where $\text{diag}(\bm \Delta)$ is a \( n \times n \) diagonal matrix, the diagonal elements are the elements of \(\bm \Delta\).

Finally, given an all-zero matrix \(\bm{h}_0 \in \mathbb{R}^{d_{\rm s}\times d}\) as the initial state, we can iteratively compute the outputs and hidden states in \ourssmname:
\begin{subequations}  
\begin{align}
\bm h_t &=\bar{A}_t \bm h_{t-1}+\bar{\bm B}_t \otimes\bm x_t, \label{dssm21} \\ \bm y_t &=\bm h_t^\top\bm C_t , \label{dssm22}
\end{align}
\end{subequations}
where \(\otimes\) represents the outer product, which combines \(\bar{\bm B}_t \in \mathbb{R}^{d_s}\) and \(\bm x_t \in \mathbb{R}^d\) into a matrix of size \(d_s \times d\), \(\bm Y = [\bm y_1, \bm y_2, \ldots, \bm y_n]^\top\in \mathbb{R}^{n\times d}\),  \(\bm H = [\bm h_1, \bm h_2, \ldots, \bm h_n]^\top\in \mathbb{R}^{nd\times d_{\rm s}}\) are the outputs and hidden states of \ourssmname\ block, separately, and $\bm h_n \in \mathbb{R}^{d_{\rm s}\times d}$ is referred to as the \emph{final state} that characterizes the user's current preference. 

\subsubsection{Feed Forward Network Layer and Prediction Layer}\label{pred}
$\bm Y$ is then passed through a Feed Forward Network (FFN) layer to adapt the features to the semantic space of the next layer: 
\begin{equation}
    \bm{O} = \text{FFN}(\bm Y),
\end{equation}
where \(\bm O = [\bm o_1, \bm o_2, \ldots, \bm o_n]^\top\in \mathbb{R}^{n\times d}\) is the output embeddings. In addition, this process involves residual connections, layer normalization, and other transformations, which are not explicitly represented in the equations for the sake of simplicity.

Based on the output embeddings \(\bm O \) generated by the FFN. We use the last element $\bm o_n$ to predict the next item the user is likely to interact with. The prediction layer computes logits for all items as:
\begin{equation}
\hat{\bm{z}} = \text{Softmax}\left(\bm{E}\bm{o}_{n}\right),
\end{equation}
where $\hat{\bm{z}}\in \mathbb{R}^{|\mathcal{V}|}$, Softmax is the softmax function and $\bm{E}$ is the embedding table of the embedding layer.
\subsubsection{Recommendation Loss}\label{rec_loss}
The recommendation loss is then computed using the cross-entropy loss:
\begin{equation}
\mathcal{L}_{\text{rec}} = -\frac{1}{n} \sum_{i=1}^{n} z_i \log \left(\hat{z}_i\right),
\end{equation}
where $z_i$ denotes the ground-truth for item $i$, $\hat{z}_i$ denotes the predicted logit for item $i$ in $\hat{z}$. This loss encourages the model to maximize the predicted probability of the true interacted item, improving recommendation accuracy.

\subsection{Time Interval Alignment}\label{sslsec1}
To more accurately capture the temporal information in distribution shifts, we introduce a time interval alignment loss in \ourname. In this process, we not only consider relative time but also incorporate absolute timestamps to better reflect the dynamics of temporal information. As shown in the left half of Figure~\ref{main_model2}, we illustrate the computation method for the time interval alignment loss. In the \ourssmname\ block, the discretization process (Equation~\eqref{discretize}) involves the use of time steps $\bm{\Delta} = [\Delta_1, \Delta_2, \ldots, \Delta_n]^\top \in \mathbb{R}_+^n$, where $\Delta_i \in \mathbb{R}_+$. Ensuring the correctness of the time steps is crucial for alleviating user interest shifts, as accurate time steps can better capture the temporal information embedded in distribution shifts.

As discussed in Section~\ref{t3amamba}, the time steps $\bm{\Delta}$ in the \ourblockname\ block are not fixed but are dynamically generated based on the input sequence $\mathcal{S}$ through the embedding layer output $\bm{E}_{\mathcal{S}}$. This dynamic adjustment enables the model to adapt to variations in input features, resulting in a more flexible temporal representation. However, $\bm{\Delta}$ can only adaptively learn the temporal information in timestamp sequence \([ t_1, t_2, \dots, t_{n} ] \), while ignoring the prediction timestamp $t_{n+1}$. To address this, we aim to obtain an adaptive test-time temporal representation $\Delta_{n+1} \in \mathbb{R}_+$ in \ourssmname.

To achieve this, although we cannot directly access the ground truth item $\bm{v}_{n+1}$ during prediction, we consider the output embedding of the feed forward network layer \(\bm{o}_n \in \mathbb{R}^d\). This embedding is used in the prediction layer to compute similarity scores with all item embeddings via a dot product (as detailed in Section~\ref{pred}). During training, \(\bm{o}_n\) is expected to progressively converge toward the embedding representation of the next ground truth item \(\bm{v}_{n+1}\). Based on this observation, we re-feed \(\bm{o}_n\) into the \ourblockname\ block to estimate the adaptive time step $\Delta_{n+1}$ using Equation~\eqref{transform}:
\begin{equation}
    \label{transform2}
    \hat{\bm x}_{n+1}, {\bm B}_{n+1}, \bm C_{n+1}, \Delta_{n+1} \leftarrow \text{Transform}(\bm o_n).
\end{equation}

After obtaining the adaptive time steps $\bm{\Delta}$ and ${\Delta}_{n+1}$, we demonstrate how to align them with the timestamps $t_1, t_2, \dots, t_{n+1}$. First, we compute the time interval sequence based on these timestamps and pad it with 0 at the beginning (since the timestamps of interactions prior to the input sequence are unavailable):
\begin{equation}
\label{ti}
\bm{T} = [0, t_2-t_1, t_3-t_2, \dots, t_{n+1}-t_n],
\end{equation}
where $\bm{T} \in \mathbb{R}^{n+1}$ serves as the ground truth for $\bm{\Delta}$ and $\Delta_{n+1}$.

Then we propose using a pairwise loss to align $\bm{\Delta}$ and ${\Delta}_{n+1}$ with the ground truth time interval $T$, excluding the padded $0$. Specifically, 
we compute the pairwise self-supervise loss as follows:
\begin{equation}
\mathcal{L}_{\text{pairwise}} = \sum_{2\leq i < j\leq n+1} \max\left\{0, 1 - (\Delta_i - \Delta_j) \cdot \left(\frac{T_i - T_j}{\lambda}\right) \right\},   
\end{equation}
where \((\Delta_i - \Delta_j)\) represents the predicted pairwise differences, \((T_i - T_j)\) represents the ground truth pairwise differences, $\lambda$ is constant value for scaling. The loss penalizes mismatches between the relative signs of predicted and ground truth time differences, preserving the relative relationships between the learned time intervals instead of relying on their absolute values.

However, directly applying this loss introduces certain challenges. When handling long sequences (e.g., $50$–$200$ items), the computational complexity of pairwise difference calculations increases to \(O(n^2)\). To address this issue, we adopt a block-based computation approach to simplify the process. Specifically, the sequence is divided into smaller, non-overlapping blocks of size \(b\), and pairwise losses are computed independently within each block. For each block, we first calculate the pairwise differences for predicted values \((\Delta_i - \Delta_j)\) and ground truth values \((T_i - T_j)\). Next, we apply a mask to exclude invalid pairs, accommodating sequences of varying lengths. Finally, the \emph{time interval alignment loss} for each block is computed using the hinge loss formulation and normalized by the total number of valid pairs across all blocks:
\begin{equation}
\mathcal{L_{\mathrm{time}}} = \frac{\sum_{\text{block}} \sum_{2\leq i < j\leq n+1} \max\left\{0, 1 - (\Delta_i - \Delta_j) \cdot \left(\frac{T_i - T_j}{\lambda}\right) \right\}}{\text{Total Valid Pairs}}.
\label{ssl1}
\end{equation}  
This approach reduces both computational and memory complexity from \(O(n^2)\) to \(O(b^2 \times \lceil n/b \rceil)\), significantly enhancing scalability for long sequences.

\subsection{User Interest State Alignment}\label{sslsec2}
To ensure the model accurately captures and represents the evolving patterns of user interests over time, we introduce a \oursslname\ alignment loss in \ourname. As shown in Figure~\ref{intro}, user interest shifts during testing often occur toward the end of the input sequence. Therefore, it is crucial for recommendation models to better understand and align with the user’s interests at the tail end. To achieve this, we align the final states of \ourblockname.

As illustrated on the right side of Figure~\ref{main_model2}, we present the computation process for the \oursslname\ alignment loss. First, we introduce a backward state update function, which is then applied to align the final state of the input user history sequence. Ensuring the correctness of the final states generated by the model is critical for alleviating shifts in user interests.

\subsubsection{Backward State Update Function}
Given the forward state update function $\bm h'(t) = A \bm h(t) + \bm B \bm x(t)$, similar to Equation~\eqref{ssm1}, multiplying both sides by  $A^{-1}$ yields its backward form:
\begin{equation}
\begin{aligned}
\bm h (t) &= A^{-1} \bm h'(t) + \left( - A^{-1} \bm B \bm x(t) \right), 
\end{aligned}
\label{pq:multi:both}
\end{equation}
where $\bm h'(t)$ denotes the derivative of 
 $\bm h(t)$. 
To simplify the notation, we define:
\begin{equation}
\begin{aligned}
P^{\rm b} = A^{-1}, \quad \bm Q^{\rm b} = -A^{-1} \bm B.
\label{pq}
\end{aligned}
\end{equation}
Substituting Equation~\eqref{pq} into Equation~\eqref{pq:multi:both} yields
\begin{equation}
\label{dispq:bsuf}
\bm h (t) = P^{\rm b} \bm h'(t) + \bm Q^{\rm b} \bm x(t).
\end{equation}
For the discrete-time system, we adopt the Zero-Order Hold~\cite{gu2021combining} method for the discretize process with a scalar step size $\Delta$ as follows:
\begin{equation}
\label{dispq}
\bar{P}^{\rm b} = e^{\Delta P^{\rm b}}, \qquad \bar{\bm Q}^{\rm b} = \Delta \bm Q^{\rm b}.
\end{equation}
This discretize process can be simplified as $\bar{P}, \bar{\bm Q} = \text{discretize}(\Delta, P, \bm Q)$. 
Using this parameterization, Equation~\eqref{dispq:bsuf} can be expressed as a \emph{backward discrete state update function} as follows:
\begin{equation}
\begin{aligned}
\bm h^{\rm b}_{t} &= \bar{P}^{\rm b} \bm h_{t+1} + \bar{\bm Q}^{\rm b} \bm x_{t}.
\end{aligned}
\label{bds}
\end{equation}


\subsubsection{Interest State Alignment Loss}
In the \ourssmname\ block explained in Section~\ref{t3amamba}, as described in Equation~(\ref{dssm21}) and (\ref{dssm22}), \ourssmname\ block generates the hidden states \(\bm H = [\bm h_1, \bm h_2, \ldots, \bm h_n]^\top\). We take its final state \( \bm h_n \in \mathbb{R}^{d_{\rm s}\times d}\) and subsequently perform the \oursslname\ alignment operation on it. This is because $\bm h_n$ encapsulates the entirety of the user's historical information while retaining the most recent behavioral context with maximal fidelity.

To begin, we estimate the forward state for the next step at $t_{n+1}$~(abbreviated as \emph{forward state}), denoted as \(\hat{\bm{h}}_{n+1}\in\mathbb{R}^{d_{\rm s}\times d}\), after \(\bm{h}_n\). Following the approach in Equation~\eqref{transform2}, the feed-forward network layer output \(\bm{o}_n\) is transformed to derive \(\hat{\bm{x}}_{n+1}\in\mathbb{R}^{d}\), \({\bm{B}}_{n+1}\in\mathbb{R}^{d_{\rm s}}\), \(\bm{C}_{n+1}\in\mathbb{R}^{d_{\rm s}}\), and \(\Delta_{n+1}\in\mathbb{R}_+\). These terms are subsequently used to compute \(\hat{\bm{h}}_{n+1}\).
Specifically, the discretized terms \(\bar{A}_{n+1} \in \mathbb{R}\) and \(\bar{\bm{B}}_{n+1} \in \mathbb{R}^{d_{\rm s} \times d}\) are computed as follows:
\begin{equation}
    \bar{A}_{n+1}=e^{\Delta_{n+1} A},\quad\bar{\bm B}_{n+1}=\Delta_{n+1} \bm B_{n+1},
    \label{discretize}
\end{equation}
Using these, along with \(\bm{h}_n\) and \(\hat{\bm{x}}_{n+1}\), we update the forward state:
\begin{equation}
\hat{\bm h}_{n+1} = \bar{A}_{n+1} \bm h_{n} + \bar{\bm B}_{n+1}\otimes \hat{\bm x}_{n+1}.
\label{forward}
\end{equation}

Next, we introduce a \emph{backward state}, denoted as \(\hat{\bm h}_{n}^{\rm b}\), and evaluate its alignment with the original state \(\bm h_{n}\) in Equation~\eqref{dssm21} using an additional loss termed \emph{interest state alignment loss}. In simple terms, this involves substituting \(\hat{\bm h}_{n+1}\) from Equation~\eqref{forward} into the backward discrete state update function in Equation~\eqref{bds}, estimate \(\bar{P}^{\mathrm{b}}\) and \(\bar{\bm Q}^{\mathrm{b}}\) using neural network,  
for computing the backward state. 
More specifically, the process begins by taking \(\bm{x}_n \in \mathbb{R}^d\) in Equation~\eqref{conv:TTAA} as input and passing it through a linear layer to produce the estimated matrix \(\bm Q\in \mathbb{R}^{d_{\rm s}}\): 
\begin{equation}
    \bm{Q} \leftarrow \text{Linear}_{2}(\bm{x}_n).\\
\end{equation}
where Linear$_2$ is a linear projection from dimension $d$ to dimension $d_{\rm s}$. Then, for estimating \(\bar{P}^{\mathrm{b}}\) and \(\bar{\bm Q}^{\mathrm{b}}\) in  Equation~(\ref{dispq}) with $\Delta_{n+1}$, we introduce a scalar value $P$, as specified in Theorem~\ref{TTTA:theo1}, and compute the discretized \(\bar{P}\) and \(\bar{\bm Q}\):
\begin{equation}
\label{eq:TTA:P_Q_bar}
    \bar{P} = e^{\Delta_{n+1} P}, \qquad \bar{\bm Q} = \Delta_{n+1} \bm Q.
\end{equation}
Subsequently, substituting Equation~\eqref{eq:TTA:P_Q_bar} into the backward discrete state update function in Equation~\eqref{bds}, we derive the following \emph{backward state}: 
\begin{equation}
\hat{\bm h}^{\rm b}_{n} = \bar{P} \hat{\bm h}_{n+1} + \bar{\bm Q}\otimes{\bm x}_{n}.
\label{backward}
\end{equation}
Finally, we define the \emph{interest state alignment loss} as follows:
\begin{equation}
\mathcal{L}_{\mathrm{state}} = \frac{\left\|\bm h_{n} - \hat{\bm h}^{\rm b}_{n}\right\|_2}{\Delta_{n+1}^2},
\label{ssl2}
\end{equation}
where the coefficient \(\frac{1}{\Delta_{n+1}^2}\) serves as a dilution term as analyzed in  Theorem~\ref{TTTA:theo1}. Intuitively, the interest state alignment loss ensures that the model's internal representation of the user's current interest aligns with the expected state derived from the sequence. Specifically, it reconstructs the user's interest state at the end of their interaction sequence and aligns it with a backwardly updated version of the same state, enabling the model to fine-tune its understanding during the testing phase. 
Theoretically, we provide an upper bound of  $\mathcal{L}_{\mathrm{state}}$ to analyze its effect.
\begin{theorem}
    \label{TTTA:theo1}
Denote $\bm  \epsilon_n = \bm Q - \bm Q^{\mathrm{b}}_{n+1}$ and $\bm Q^{\mathrm{b}}_{n+1} = - A^{-1}\bm{B}_{n+1} $, and let $P = \frac{\ln\left(- A^{-1}\right)}{\Delta_{n+1}}$, 
the following upper bound for $\mathcal{L}_{\mathrm{state}}$ holds:
\begin{equation*}
\begin{aligned}
\mathcal{L}_{\mathrm{state}} \leq \Delta_{n+1}^{-2}\left\| \bm{h}_n \right\|_2
+ \Delta_{n+1}^{-1}\left\|\bm{x}_n\right\|_2 \left\| \bm \epsilon_n \right\|_2
+ A^{-1}\left\|\bm{B}_{n+1}\right\|_2 \left\|\frac{\bm{x}_n - \hat{\bm{x}}_{n+1} }{\Delta_{n+1}}\right\|_2. \\
\end{aligned}
\end{equation*}
\end{theorem}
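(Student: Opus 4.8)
The plan is to turn $\mathcal{L}_{\mathrm{state}}=\|\bm h_n-\hat{\bm h}^{\rm b}_n\|_2/\Delta_{n+1}^2$ into an explicit algebraic residual by unrolling every definition, and then to control that residual with the triangle inequality together with two elementary facts: $\|c\bm M\|_2=|c|\,\|\bm M\|_2$ for a scalar $c$, and $\|\bm u\otimes\bm v\|_2=\|\bm u\|_2\|\bm v\|_2$ for an outer product. Nothing deeper should be needed.

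\emph{Step 1 (unroll the residual).} Substitute the backward state $\hat{\bm h}^{\rm b}_n=\bar P\hat{\bm h}_{n+1}+\bar{\bm Q}\otimes\bm x_n$ from Eq.~\eqref{backward}, then the forward state $\hat{\bm h}_{n+1}=\bar A_{n+1}\bm h_n+\bar{\bm B}_{n+1}\otimes\hat{\bm x}_{n+1}$ from Eq.~\eqref{forward}, and finally the discretizations $\bar A_{n+1}=e^{\Delta_{n+1}A}$, $\bar{\bm B}_{n+1}=\Delta_{n+1}\bm B_{n+1}$, $\bar{\bm Q}=\Delta_{n+1}\bm Q$, together with the crucial identity $\bar P=e^{\Delta_{n+1}P}=-A^{-1}$, which holds precisely because $P=\ln(-A^{-1})/\Delta_{n+1}$ and $-A^{-1}>0$ (as $A<0$). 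Collecting terms then writes $\bm h_n-\hat{\bm h}^{\rm b}_n$ as the sum of a multiple of $\bm h_n$ with scalar coefficient $1+A^{-1}e^{\Delta_{n+1}A}$, a term proportional to $\Delta_{n+1}\,\bm Q\otimes\bm x_n$, and a term proportional to $A^{-1}\Delta_{n+1}\,\bm B_{n+1}\otimes\hat{\bm x}_{n+1}$.

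\emph{Step 2 (introduce $\bm\epsilon_n$ and the input deviation).} Write $\bm Q=\bm\epsilon_n+\bm Q^{\rm b}_{n+1}$ with $\bm Q^{\rm b}_{n+1}=-A^{-1}\bm B_{n+1}$, and split the $\bm Q\otimes\bm x_n$ term into an $\bm\epsilon_n\otimes\bm x_n$ piece and a $-A^{-1}\bm B_{n+1}\otimes\bm x_n$ piece; recombining the latter with the $\bm B_{n+1}\otimes\hat{\bm x}_{n+1}$ term collapses them into a single term carrying the input deviation, so the residual assumes the form
\[
\bm h_n-\hat{\bm h}^{\rm b}_n=\bigl(1+A^{-1}e^{\Delta_{n+1}A}\bigr)\bm h_n+\Delta_{n+1}\,\bm\epsilon_n\otimes\bm x_n+A^{-1}\Delta_{n+1}\,\bm B_{n+1}\otimes(\hat{\bm x}_{n+1}-\bm x_n).
\]
Lining up the signs here — against the conventions of the forward update, the backward update, and the definition of $\bm Q^{\rm b}$ — is where care is required: each manipulation is routine, but the bookkeeping is what makes the \emph{difference} $\hat{\bm x}_{n+1}-\bm x_n$ appear instead of a sum.

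\emph{Step 3 (norm bounds and rescaling).} Apply the triangle inequality to the identity of Step~2, then $\|c\bm M\|_2=|c|\,\|\bm M\|_2$ and $\|\bm u\otimes\bm v\|_2=\|\bm u\|_2\|\bm v\|_2$ to each summand. For the $\bm h_n$ coefficient, $A<0$ and $\Delta_{n+1}>0$ give $e^{\Delta_{n+1}A}\in(0,1)$, hence $A^{-1}e^{\Delta_{n+1}A}$ is negative with magnitude below $|A^{-1}|$; this yields $1+A^{-1}e^{\Delta_{n+1}A}\le 1$, and $|1+A^{-1}e^{\Delta_{n+1}A}|\le 1$ provided $A$ stays in the regime $|A^{-1}|\le 2$ (the only implicit assumption the bound requires). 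Dividing through by $\Delta_{n+1}^2$ and rewriting $\Delta_{n+1}^{-1}\|\hat{\bm x}_{n+1}-\bm x_n\|_2=\bigl\|(\bm x_n-\hat{\bm x}_{n+1})/\Delta_{n+1}\bigr\|_2$ produces exactly the stated three-term bound (with the coefficient of the last term understood as $-A^{-1}=|A^{-1}|>0$).

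The main obstacle is Step~2: the rest is substitution plus the triangle inequality, but the cancellation that folds the two $\bm B_{n+1}$ contributions into the single input-deviation term $\hat{\bm x}_{n+1}-\bm x_n$ is sign-sensitive and has to be matched precisely to the ZOH discretization conventions; a secondary, minor point is that bounding $|1+A^{-1}e^{\Delta_{n+1}A}|$ by $1$ tacitly assumes $|A^{-1}|\le 2$.
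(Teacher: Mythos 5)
Your route is the same as the paper's: unroll $\hat{\bm h}^{\rm b}_n$ and $\hat{\bm h}_{n+1}$, use $e^{\Delta_{n+1}P}=-A^{-1}$ to make the two $\bm B_{n+1}$ contributions interact, isolate $\bm\epsilon_n$, and finish with the triangle inequality and $\|\bm u\otimes\bm v\|_2=\|\bm u\|_2\|\bm v\|_2$. Your two side remarks are also well taken and in fact sharper than the paper: the bound $|1+A^{-1}e^{\Delta_{n+1}A}|\le 1$ does require an implicit restriction on $A$ (the paper's parenthetical ``$e^x\le 1,\forall x\le 0$'' silently assumes $P+A\le 0$), and the coefficient of the last term should be read as $-A^{-1}=|A^{-1}|>0$ rather than $A^{-1}$.

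However, the identity you assert in Step~2 does not follow from the stated update rules, and this is precisely the step you yourself flagged as sign-sensitive. Expanding Equation~\eqref{backward} with Equation~\eqref{forward}, $\bar P=e^{\Delta_{n+1}P}=-A^{-1}$, $\bar{\bm Q}=\Delta_{n+1}\bm Q$ and $\bm Q=\bm\epsilon_n-A^{-1}\bm B_{n+1}$ gives
\[
\bm h_n-\hat{\bm h}^{\rm b}_n=\bigl(1+A^{-1}e^{\Delta_{n+1}A}\bigr)\bm h_n-\Delta_{n+1}\,\bm\epsilon_n\otimes\bm x_n+A^{-1}\Delta_{n+1}\,\bm B_{n+1}\otimes\bigl(\hat{\bm x}_{n+1}+\bm x_n\bigr),
\]
because the two $\bm B_{n+1}$ contributions enter with the \emph{same} coefficient $+A^{-1}\Delta_{n+1}$ (one from $-\bar P\hat{\bm h}_{n+1}$, one from $-\bar{\bm Q}\otimes\bm x_n$ after substituting $\bm Q^{\rm b}_{n+1}=-A^{-1}\bm B_{n+1}$); they reinforce into the sum $\hat{\bm x}_{n+1}+\bm x_n$ rather than cancel into the difference you claim. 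To be fair, the paper's own proof commits the same slip: its fourth displayed line flips $-\Delta_{n+1}\bm Q\otimes\bm x_n$ into $+\Delta_{n+1}\bm Q\otimes\bm x_n$ relative to the third, and it is only this flip that produces $\bm x_n-\hat{\bm x}_{n+1}$. So your argument reproduces the paper's, error included. The stated bound would genuinely follow if the backward update carried a minus sign on the $\bar{\bm Q}\otimes\bm x_n$ term (equivalently, if $\bm Q^{\rm b}$ were defined as $+A^{-1}\bm B$); under the conventions as written, neither your Step~2 identity nor the paper's line~4 holds, and both derivations instead establish the bound with $\|(\bm x_n+\hat{\bm x}_{n+1})/\Delta_{n+1}\|_2$ in the last term.
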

In Theorem~\ref{TTTA:theo1}, \(\|\epsilon_n\|_2\) is considered an instance-dependent term that depends on the input \(\bm{x}_n\), and it tends to increase for unseen user interaction \(\bm{x}_n\). The term \( \left\|(\bm{x}_n - \hat{\bm{x}}_{n+1}) / \Delta_{n+1} \right\|_2  \) aims to ensure consistency between the model and the most recent interaction under dynamic behavior changes, especially when the time interval \(\Delta_{n+1}\) between the user's testing time and their most recent behavior is very short. Similar to how humans continuously update expectations based on new experiences, this alignment mechanism enables the model to dynamically adapt to changes in user behavior, thereby improving recommendation accuracy.

\begin{algorithm}[tbp]
\caption{Test-Time Alignment for \ourname~ on a Single Batch}
\label{alg:test_time_training}
\raggedright
\textbf{Input:} Batch of test sequences, denoted as $\{\mathcal{S}_i\}_{i=1}^{m}$, where each represented as \(\mathcal{S}_i = [v_1, v_2, \dots, v_n]\) with the corresponding timestamp sequence \([ t_1, t_2, \dots, t_{n} ] \) and timestamp of prediction \(t_{n+1}\), well-trained model \(g_{\bm \theta}(\cdot)\), number of training steps \(M\), learning rate \(\alpha\), and weight parameters \(\mu_1^{\mathrm{test}}\) and \(\mu_2^{\mathrm{test}}\)

\textbf{Output:} Final prediction \(\bm{o}_n\)

\begin{algorithmic}[1]
\STATE Record the original model parameters: \(\bm \theta_{o} \gets \bm \theta\)
\FOR{\(\text{step} = 1, 2, \ldots, M\)}
    \STATE Forward \(g_{\bm \theta}(\{\mathcal{S}_i\}_{i=1}^{m})\) and get the learned step size $\Delta$, final state \(\bm h_n\) and output \(\bm{o}_n\)
    \STATE Compute the time interval sequence $\bm T$ using Equation~(\ref{ti})
    \STATE Calculate \(\mathcal{L}_{\mathrm{time}}\) from Equation~(\ref{ssl1}) using \(\bm\Delta\) and $\bm T$
    \STATE Compute the forward state \(\hat{\bm h}_{n+1}\) using Equation~(\ref{forward}) 
    \STATE Compute the backward state \(\hat{\bm h}^{\rm b}_{n}\) using Equation~(\ref{backward})
    \STATE Calculate \(\mathcal{L}_{\mathrm{state}}\) from Equation~(\ref{ssl2}) using \(\bm h_{n}\) and \(\hat{\bm h}^{\rm b}_{n}\)
    \STATE Update model parameters:
    \[
    {\bm \theta} \gets {\bm \theta} - \alpha \nabla_{\bm \theta} \left(\mu_1^{\mathrm{test}}\mathcal{L}_{\mathrm{time}} + \mu_2^{\mathrm{test}}\mathcal{L}_{\mathrm{state}}\right)
    \]
\ENDFOR
\STATE Compute final prediction \(\bm{o}_n\) using the updated \(g_{\bm \theta}(\{\mathcal{S}_i\}_{i=1}^{m})\)
\STATE Restore the model parameters: \({\bm \theta} \gets {\bm \theta}_{o}\) for the next batch
\end{algorithmic}
\end{algorithm}

\begin{table}[t]
\centering
\caption{Dataset statistics.}
\label{sta}
\begin{tabular}{l|r|r|r}
\hline
\textbf{Dataset} & \textbf{ML-1M} & \textbf{Amazon} & \textbf{Zhihu-1M} \\ 
\hline
\#Users & 6,034 & 247,659 & 7,974\\
\#Items & 3,706 & 10,814 & 81,563\\
\#Interactions & 1,000,209 & 471,615 & 999,970\\
Avg. Length & 138.3 & 17.04 & 29.03 \\
Sparsity & 95.57\% & 98.78\% & 99.48\% \\
\hline
\end{tabular}
\end{table}

\begin{table*}[t]
\centering
  \caption{Performance comparison of different sequential recommendation models. The best result is bolded and the runner-up is underlined. * means improvements over the second-best methods are significant (\textit{t}-test, \textit{p}-value < 0.05).}
  \label{tab:main}
\begin{tabular}{l|ccc|ccc|ccc}
\hline
\multirow{2}{*}{Model} & \multicolumn{3}{c}{ML-1M} & \multicolumn{3}{c}{Amazon} & \multicolumn{3}{c}{Zhihu-1M} \\
\cline{2-4} \cline{5-7} \cline{8-10}
& Recall@10 & MRR@10 & NDCG@10 & Recall@10 & MRR@10 & NDCG@10 & Recall@10 & MRR@10 & NDCG@10  \\
\hline
\hline
Caser & $0.1954$ & $0.0703$ & $0.0994$ & $0.0594$ & $0.0204$ & $0.0294$ & $0.0288$ & $0.0089$ & $0.0134$\\
GRU4Rec & $0.2732$ & $0.1147$ & $0.1518$ & $0.0939$ & $0.0480$ & $0.0586$ & $0.0283$ & $0.0092$ & $0.0142$\\
BERT4Rec & $0.2770$ & $0.1093$ & $0.1482$ & $0.0675$ & $0.0372$ & $0.0443$ & $0.0289$ & $0.0098$ & $0.0142$\\
SASRec & $0.2471$ & $0.0911$ & $0.1273$ & $\underline{0.1025}$ & $0.0527$ & $0.0644$ & $0.0364$ & $0.0098$ & $0.0159$ \\
Mamba4Rec & $0.2813$ & $0.1201$ & $0.1578$ & $0.1003$ & $0.0522$ & $0.0635$ & $0.0355$ & $0.0112$ & $0.0167$\\
TiM4Rec & $0.2873$ & $\underline{0.1211}$ & $0.1596$ & $0.1016$ & $\underline{0.0567}$ & $\underline{0.0665}$ & \underline{$0.0384$} & \underline{$0.0118$} & \underline{$0.0179$}\\
TTT4Rec& $\underline{0.2887}$ & ${0.1208}$ & $\underline{0.1599}$ & $0.1020$ & $0.0560$ & $0.0655$ & $0.0370$ & $0.0115$ & $0.0174$\\
\ourname ~(ours) & $\mathbf{0.2932}^{*}$ & $\mathbf{0.1262}^{*}$ &$\mathbf{0.1648}^{*}$ & $\mathbf{0.1102}^{*}$ & $\mathbf{0.0580}^{*}$ & $\mathbf{0.0705}^{*}$ & $\mathbf{0.0402}^{*}$ & $\mathbf{0.0122}^{*}$ & $\mathbf{0.0187}^{*}$\\
\hline
\end{tabular}
\end{table*}

\subsection{Training and Testing Process}
\subsubsection{Training Process}\label{training}
After designing the two self-supervised losses, the total loss of the model combines the primary recommendation loss \(\mathcal{L}_{\text{rec}}\) in Section~\ref{rec_loss} and the two self-supervised losses \(\mathcal{L}_{\mathrm{time}}\) and \(\mathcal{L}_{\mathrm{state}}\), weighted by their respective hyperparameters:
\begin{equation}
\mathcal{L}_{\text{total}} = \mathcal{L}_{\text{rec}} + \mu^{\mathrm{train}}_1 \mathcal{L}_{\mathrm{time}} + \mu^{\mathrm{train}}_2 \mathcal{L}_{\mathrm{state}},
\label{main_loss}
\end{equation}
where \(\mu^{\mathrm{train}}_1\) and \(\mu^{\mathrm{train}}_2\) are the weights for the self-supervised losses. This combined loss optimizes the model by integrating both the primary task and self-supervised learning objectives.

\subsubsection{Testing Process}
During testing, the pre-trained model $g_{\theta}(\cdot)$'s parameters $\theta$ are adaptively fine-tuned on all testing examples. For each batch of test data $\{\mathcal{S}_i\}_{i=1}^{m}$, optimization is performed using the self-supervised losses \(\mathcal{L}_{\mathrm{time}}\) and \(\mathcal{L}_{\mathrm{state}}\) as defined in Equation~(\ref{ssl1}) and Equation~(\ref{ssl2}). The trained model is then used to make the final item predictions. After completing the predictions, the adjusted parameters on $\theta$ are discarded, and the model reverts to its original parameters to process the next batch of testing examples. The pseudo-code for processing a single batch of testing examples is illustrated in Algorithm~\ref{alg:test_time_training}.
\section{Experiments}
To verify the effectiveness of \ourname, we conduct extensive experiments and report detailed analysis results.

\subsection{Experimental Settings}
\subsubsection{Datasets}
We evaluate the performance of the proposed model through experiments conducted on three public datasets:
\begin{itemize}[leftmargin=*]
    \item MovieLens-1M (referred to as \textbf{ML-1M})~\cite{harper2015movielens}: A dataset collected from the MovieLens platform, containing approximately 1 million user ratings of movies.
    \item Amazon Prime Pantry (referred to as \textbf{Amazon})~\cite{ni2019justifying}: A dataset of user reviews in the grocery category collected from the Amazon platform up to 2018.
    \item \textbf{Zhihu-1M}~\cite{hao2021large}: A dataset sourced from a large knowledge-sharing platform (Zhihu), consisting of raw data, including information on questions, answers, and user profiles.
\end{itemize}
For each user, we sort their interaction records by timestamp to generate an interaction sequence. 
We retain only users and items associated with at least ten interaction records.
We follow the leave-one-out policy~\cite{kang2018self} for training-validation-testing partition. The statistical details of these datasets are presented in Table~\ref{sta}.

\subsubsection{Baselines}
To demonstrate the effectiveness of our proposed method, we conduct comparisons with several representative sequential recommendation baseline models:
\begin{itemize}[leftmargin=*]
    \item \textbf{Caser}~\cite{tang2018personalized}: A foundational sequential recommendation model leveraging Convolutional Neural Networks (CNN).
    \item \textbf{GRU4Rec}~\cite{hidasi2015session}: A method based on Recurrent Neural Networks (RNN), specifically utilizing Gated Recurrent Units (GRU) for sequential recommendation.
    \item \textbf{BERT4Rec}~\cite{sun2019bert4rec}: A model that adopts the bidirectional attention mechanism inspired by the BERT~\cite{devlin2018bert} framework.
    \item \textbf{SASRec}~\cite{kang2018self}: The first model to introduce the Transformer architecture to the field of sequential recommendation.
    \item \textbf{Mamba4Rec}~\cite{liu2024mamba4rec}: An innovative model that applies the Mamba architecture to the sequential recommendation domain.
    \item \textbf{TiM4Rec}~\cite{fan2024tim4rec}: A time-aware sequential recommendation model that improves SSD's low-dimensional performance while maintaining computational efficiency.
    \item \textbf{TTT4Rec}~\cite{yang2024ttt4rec}: A framework that uses Test-Time Training to dynamically adapt model parameters during inference for sequential recommendation.
\end{itemize}
\subsubsection{Evaluation Metrics}
To evaluate the performance of top-$K$ recommendation, we adapt the metrics Recall@$K$, MRR@$K$, and NDCG@$K$, which are widely used in recommendation research to evaluate model performance~\cite{qu2024scalable, zhang2024qagcf, qu2024budgeted}. In this context, we set $K = 10$ and present the average scores on the test dataset.
\subsubsection{Implementation Details}
Our evaluation is based on implementations using PyTorch~\cite{paszke2019pytorch} and RecBole~\cite{zhao2021recbole}. We set all model dimensions $d$ to $64$, the learning rate to $0.001$, and the batch size to $4096$. Additionally, we set the state dimension $d_{\mathrm{s}}$ of the \ourblockname\ to $32$, the number of blocks to $1$, the training steps M during testing $M$ to $1$ and learning rate during testing $\alpha$ to $0.005$. Furthermore, the search space for the weights of the two self-supervised losses during training, $\lambda^{\mathrm{train}}_1$ and $\lambda^{\mathrm{train}}_2$, is $\{0.01, 0.1, 1, 10\}$, and for testing, the search space for $\lambda^{\mathrm{test}}_1$ and $\lambda^{\mathrm{test}}_2$ is $\{ 1e\text{-}3, 1e\text{-}2, 1e\text{-}1, 1\}$. To adapt to the characteristics of the baselines and datasets, we set the fixed sequence length to 200 for ML-1M and 50 for other datasets (Amazon and Zhihu-1M). For a fair comparison, we ensured consistency of key hyperparameters across different models while using default hyperparameters for baseline methods as recommended in their respective papers. Finally, we utilized the Adam optimizer~\cite{kingma2014adam} in a mini-batch training manner.

\subsection{Overall Performance}
Table~\ref{tab:main} presents the performance comparison of the proposed \ourname\ and other baseline methods on three datasets. From the table, we observe several key insights:
\begin{enumerate}[leftmargin=*]
    \item General sequential methods, such as SASRec and Mamba4Rec, show varying strengths depending on the dataset characteristics. Mamba4Rec performs better on ML-1M, a dataset with longer sequences, due to its ability to model complex long-term dependencies, whereas SASRec achieves slightly higher performance on shorter sequence datasets like Amazon and Zhihu-1M, where simpler sequence modeling suffices.
    \item TiM4Rec and TTT4Rec outperform these general sequential methods by alleviating specific challenges in recommendation. Tim4Rec effectively incorporates temporal information, enhancing its ability to model time-sensitive dynamics, while TTT4Rec leverages test-time training to alleviate distribution shifts, leading to superior performance.
    \item Finally, \ourname\ outperforms all baselines across datasets, achieving the best results by introducing time interval alignment loss and \oursslname\ alignment loss. These innovations enable \ourname\ to capture temporal and sequential patterns more effectively and adapt dynamically to test data through gradient descent during inference, further alleviating distributional shifts and enhancing overall performance.
\end{enumerate}
\begin{table*}[t]
\centering
  \caption{Ablation studies.}
  \label{tab:ablation}
\vspace{-0.3cm}
\begin{tabular}{l|ccc|ccc|ccc}
\hline
\multirow{2}{*}{Model} & \multicolumn{3}{c}{ML-1M} & \multicolumn{3}{c}{Amazon} & \multicolumn{3}{c}{Zhihu-1M} \\
\cline{2-4} \cline{5-7} \cline{8-10}
& Recall@10 & MRR@10 & NDCG@10 & Recall@10 & MRR@10 & NDCG@10 & Recall@10 & MRR@10 & NDCG@10  \\
\hline
\hline
\ourname & $0.2932$ & $0.1262$ & $0.1648$ & $0.1102$ & $0.0580$ & $0.0705$ & $0.0402$ & $0.0122$ & $0.0187$\\
\hdashline
w/o $\mathcal{L}_\text{both}$ & $0.2813$ & $0.1207$ & $0.1578$ & $0.1003$ & $0.0522$ & $0.0635$ & $0.0355$ & $0.0112$ & $0.0167$\\
w/o $\mathcal{L}_\mathrm{time}$ & $0.2891$ & $0.1225$ & $0.1618$ & $0.1026$ & $0.0531$ & $0.0649$ & $0.0389$ & $0.0110$ & $0.0180$\\
w/o $\mathcal{L}_\mathrm{state}$ & $0.2876$ & $0.1214$ & $0.1590$ & $0.1010$ & $0.0545$ & $0.0652$ & $0.0365$ & $0.0114$ & $0.0175$\\
\hdashline
w/o $\mathcal{L}_\text{both}^{\mathrm{test}}$ & $0.2899$ & $0.1239$ & $0.1626$ & $0.1033$ & $0.0568$ & $0.0677$ & $0.0389$ & $0.0113$ & $0.0177$\\
w/o $\mathcal{L}_\mathrm{time}^{\mathrm{test}}$ & $0.2912$ & $0.1252$ & $0.1645$ & $0.1095$ & $0.0570$ & $0.0690$ & $0.0390$ & $0.0120$ & $0.0180$\\
w/o $\mathcal{L}_\mathrm{state}^{\mathrm{test}}$ & $0.2921$ & $0.1241$ & $0.1635$ & $0.1067$ & $0.0558$ & $0.0686$ & $0.0393$ & $0.0120$ & $0.0183$\\
\hline
\end{tabular}
\end{table*}

\subsection{Ablation Studies}
We conduct ablation experiments in \ourname\ to validate the effectiveness of the time interval alignment loss and the \oursslname\ alignment loss. The experimental configurations include `$\mathcal{L}_\text{both}$', where both losses are removed during training and testing, as well as `$\mathcal{L}_\mathrm{time}$' and `$\mathcal{L}_\mathrm{state}$', where only one of the losses is removed. During testing, we design configurations such as `$\mathcal{L}_\text{both}^{\mathrm{test}}$', `$\mathcal{L}_\mathrm{time}^{\mathrm{test}}$', and `$\mathcal{L}_\mathrm{state}^{\mathrm{test}}$', which disable all or part of the losses during the testing phase. As shown in Table~\ref{tab:ablation}, removing any loss significantly degrades model performance, highlighting the importance of adapting to use interest shifts. The time interval alignment loss captures temporal dependencies, while the \oursslname\ alignment loss captures the current users' interest state. Both are crucial for the robustness and adaptability of sequential recommendations.
\begin{figure}[t]
\centering
\includegraphics[width=0.45 \textwidth]{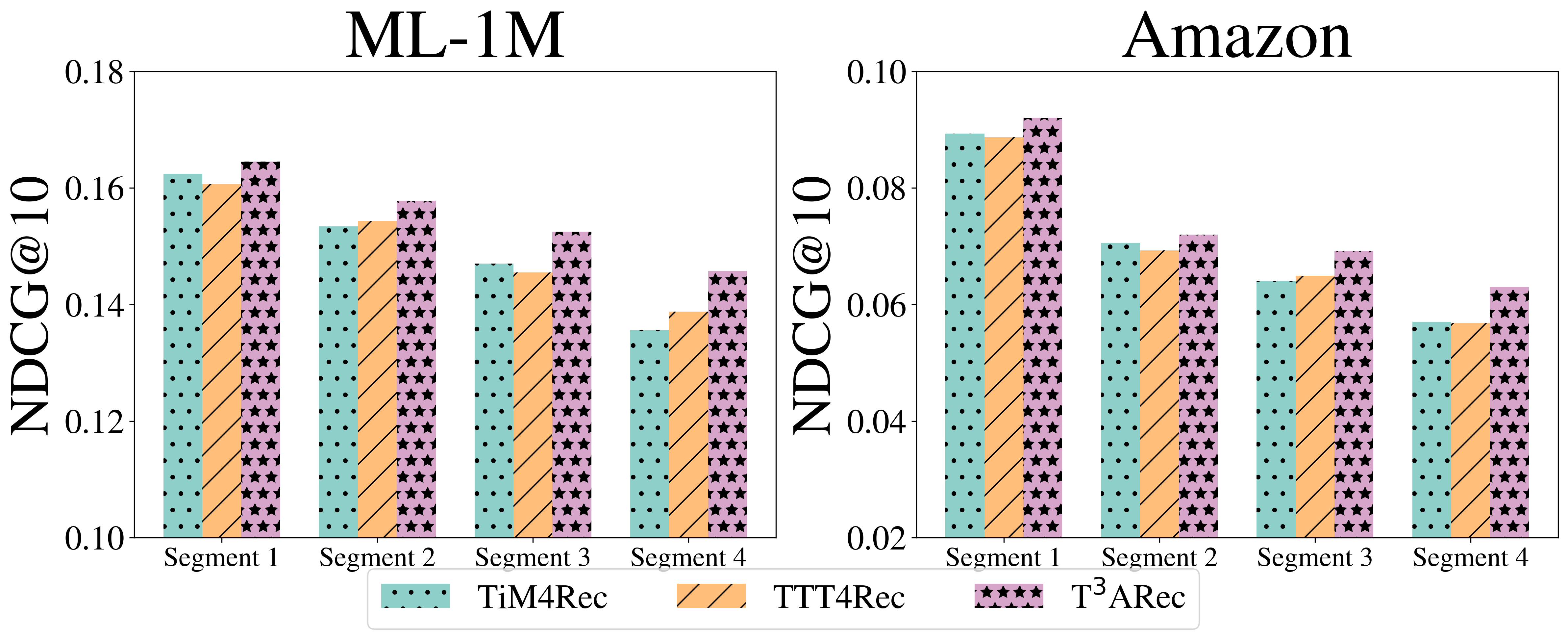}
\vspace{-0.3cm}
\caption{Effectiveness of \ourname\ on user interest shifts.}
\label{eff}
\end{figure}


\begin{table}[t]
\centering
  \caption{Analysis of test-time throughput, defined as the number of iterations per second, with each iteration processing a batch of 4096 testing examples.}
  \label{tab:timeana}
\vspace{-0.3cm}
\begin{tabular}{r|c|c|c}
\hline
\multirow{2}{*}{Model} &  \multicolumn{3}{c}{Test-Time Throughput (\# of iterations~/~second)} \\
\cline{2-4}
& \multicolumn{1}{c}{ML-1M} & \multicolumn{1}{c}{Amazon} & \multicolumn{1}{c}{Zhihu-1M}
\\
\hline
\hline
SASRec & $1.56$ & $2.26$ & $1.96$  \\
Mamba4Rec & $2.82$ & $3.11$ & $3.01$ \\
TiM4Rec & $2.16$ & $2.64$ & $2.56$ \\
TTT4Rec & $0.96$ & $1.74$ & $1.19$ \\
\ourname ~(ours)& $1.02$ & $2.05$ &$1.36$ \\
\hline
\end{tabular}
\end{table}

\begin{figure}[t]
\centering
\includegraphics[width=0.9\columnwidth]{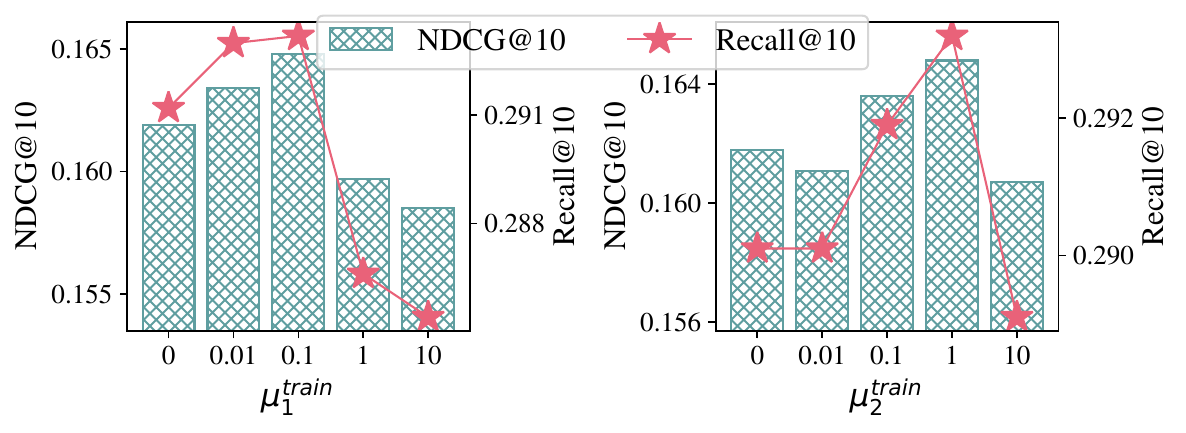}
\vspace{-0.3cm}
\caption{Sensitivity analysis of $\mu_1^{\mathrm{train}}$ and $\mu_2^{\mathrm{train}}$.}
\label{ana1}
\end{figure}

\begin{figure}[t]
\centering
\includegraphics[width=0.9\columnwidth]{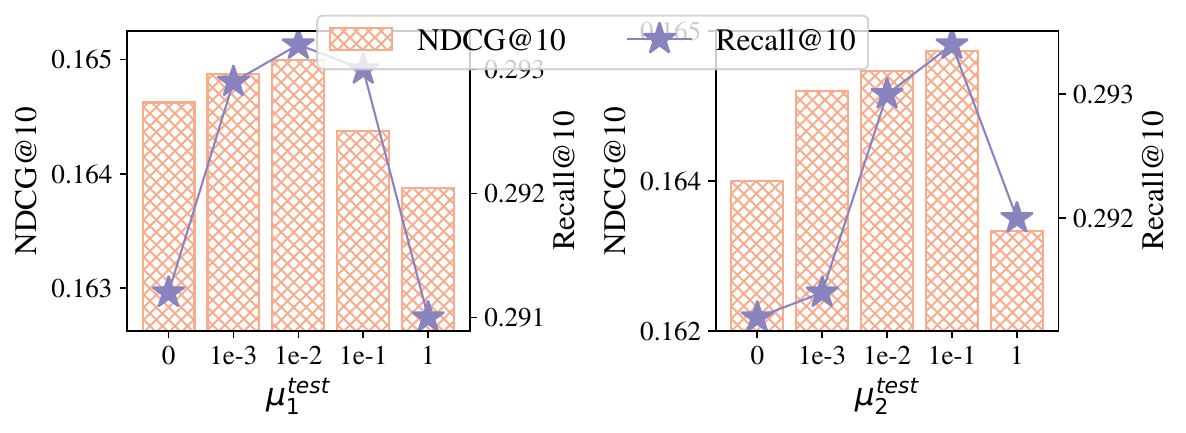}
\vspace{-0.3cm}
\caption{Sensitivity analysis of $\mu_1^{\mathrm{test}}$ and $\mu_2^{\mathrm{test}}$.}
\label{ana2}
\end{figure}

\subsection{Further Analysis}
\subsubsection{Effectiveness of \ourname\ on User Interest Shifts}
To validate the effectiveness of \ourname\ on user interest shifts, we followed the setup in Figure~\ref{vali} and compared \ourname\ with TiM4Rec and TTT4Rec on both the ML-1M and Amazon datasets. As shown in Figure~\ref{eff}, the later the timestamp of the segment, indicating a greater user interest shift, the performance of TiM4Rec and TTT4Rec shows a declining trend. On the basis of these two baselines, our model demonstrates higher improvements in Segment 3 and Segment 4 compared to Segment 1 and Segment 2. This indicates that \ourname\ performs better on test data with more significant user interest shifts, highlighting the effectiveness of our two alignment losses during Test-Time Alignment.


\subsubsection{Analysis of Test-time Throughput}
Noticed that training during testing requires gradient descent, which could introduce additional load on test-time (inference) throughput. We conducted offline experiments to analyze the impact of training during testing. The results are shown in Table~\ref{tab:timeana}, where we compare \ourname\ with several typical baselines and quantify the test-time throughput in terms of it/s (iterations per second). From the results in the table, it is evident that the test-time throughput of the method that trains during testing is approximately half of that of Mamba4Rec that does not train during testing. Nevertheless, with increasing computational power, the additional load from training during testing may not have a significant impact, especially since previous work has shown that TTT has yielded online benefits~\cite{wang2024not}. We present these findings and provide optimization opportunities for future work applying TTT in recommendations.

\subsubsection{Sensitive Analysis of $\mu_1^{\mathrm{train}}$ and $\mu_2^{\mathrm{train}}$}
We analyze the impact of the two self-supervised loss weight parameters, $\mu_1^{\mathrm{train}}$ and $\mu_2^{\mathrm{train}}$, in the training process of Equation~(\ref{main_loss}) in \ourname\ as shown in Figure~\ref{ana1}.
$\mu_1^{\mathrm{train}}$ is the weight of the time interval alignment loss during training, while $\mu_2^{\mathrm{train}}$ is the weight of the \oursslname\ alignment loss during training. The smaller the value of $\mu_1^{\mathrm{train}}$ and $\mu_2^{\mathrm{train}}$, the less influence each loss has on the model training.
From Figure~\ref{ana1}, we observe that the optimal choice for $\mu_1^{\mathrm{train}}$ is $0.1$ and for $\mu_2^{\mathrm{train}}$ is $1$. It is clear that both losses significantly impact the training of \ourname, with the \oursslname\ alignment loss having a slightly greater effect. This also confirms the importance of the model understanding the current user interest pattern.

\subsubsection{Sensitive Analysis of $\mu_1^{\mathrm{test}}$ and $\mu_2^{\mathrm{test}}$}
We analyze the impact of the two self-supervised loss weight parameters, $\mu_1^{\mathrm{test}}$ and $\mu_2^{\mathrm{test}}$, during the testing process of \ourname. 
$\mu_1^{\mathrm{test}}$ is the weight of the time interval alignment loss during testing, while $\mu_2^{\mathrm{test}}$ is the weight of the \oursslname\ alignment loss during testing. The smaller the values of $\mu_1^{\mathrm{test}}$ and $\mu_2^{\mathrm{test}}$, the less impact each loss has on the model during testing.
From Figure~\ref{ana2}, we observe that the optimal choice for $\mu_1^{\mathrm{test}}$ is $1e^{-2}$, and for $\mu_2^{\mathrm{test}}$ it is $1e^{-1}$. It is clear that the \oursslname\ alignment loss has a significant impact on the model's performance during testing. The model needs to correctly learn the current recommendation pattern from the input sequence and adapt accordingly to make better next-time predictions for the user.

\section{Conclusion}
In this work, we alleviate the critical challenge of adapting sequential recommendation to real-world scenarios where user interest shifts dynamically at the test time. Existing methods, reliant on static training data, often fail to adapt effectively to these changes, leading to substantial performance degradation. To overcome this limitation, we introduce \ourname, a novel approach that integrates Test-Time Training (TTT) into sequential recommendation through two alignment-based self-supervised losses. By aligning absolute time intervals with model-adaptive learning intervals and incorporating a state alignment mechanism, \ourname\ captures temporal dynamics and user interest patterns more effectively. Extensive evaluations on multiple benchmark datasets demonstrate that \ourname\ significantly outperforms existing methods, providing robust performance and mitigating the effects of distribution shifts. This work highlights the potential of TTT as a paradigm for enhancing the adaptability of sequential recommendation at the test time.



\appendix
\section{Proof of Theorem \ref{TTTA:theo1}}
\begin{proof}[Proof of Theorem~\ref{TTTA:theo1}]
Based on the definitions of $\mathcal{L}_{\mathrm{state}}$, $\hat{\bm{h}}_n^b$, and $\hat{\bm{h}}_{n+1}$, we can derive: 
\begin{equation*}
\begin{aligned}
&\hphantom{{}={}}
\mathcal{L}_{\mathrm{state}}
\\
&= \left\| \bm{h}_n - \hat{\bm{h}}_n^b \right\|_2 \Delta_{n+1}^{-2} \\
&= \left\| \bm{h}_n - \left( e^{\Delta_{n+1} P} \hat{\bm{h}}_{n+1} + \Delta_{n+1} \bm{Q} \otimes \bm{x}_n \right) \right\|_2\Delta_{n+1}^{-2} \\
&= \left\| \bm{h}_n - \left( e^{\Delta_{n+1} P} \left( e^{\Delta_{n+1} A} \bm{h}_n + \Delta_{n+1} \bm{B}_{n+1} \otimes\hat{\bm{x}}_{n+1} \right) + \Delta_{n+1} \bm{Q} \bm{x}_n \right) \right\|_2\Delta_{n+1}^{-2} \\
&= \left\| \left( 1 - e^{\Delta_{n+1} (P + A)} \right) \bm{h}_n 
+ \Delta_{n+1} \bm{Q} \otimes \bm{x}_n 
-  \Delta_{n+1} e^{\Delta_{n+1} P}\bm{B}_{n+1} \otimes \hat{\bm{x}}_{n+1} \right\|_2\Delta_{n+1}^{-2} \\
&= \Big\| \left( 1 - e^{\Delta_{n+1} \left(P + A\right)} \right) \bm{h}_n 
+ \Delta_{n+1} \left( \bm{Q} - e^{\Delta_{n+1} P}\bm{B}_{n+1}  \right)\otimes \bm{x}_n + \\
&\hphantom{{}=={}} \Delta_{n+1} e^{\Delta_{n+1} P} \bm{B}_{n+1} \otimes \left( \bm{x}_n - \hat{\bm{x}}_{n+1} \right) \Big\|_2\Delta_{n+1}^{-2}, \\
\end{aligned}
\end{equation*}
yielding that
\begin{equation*}
\begin{aligned}
&\hphantom{{}={}}
\mathcal{L}_{\mathrm{state}}
\\
&\leq \Big\| \left( 1 - e^{\Delta_{n+1} (P + A)} \right) \bm{h}_n \Big\|_2\Delta_{n+1}^{-2}
+ \Big\|\Delta_{n+1} \left( \bm{Q} - \bm{B}_{n+1} e^{\Delta_{n+1} P} \right)\otimes \bm{x}_n\Big\|_2\Delta_{n+1}^{-2} + \\
&\hphantom{{}={}}  \Big\|\Delta_{n+1} \bm{B}_{n+1} e^{\Delta_{n+1} P}\otimes \left( \bm{x}_n - \hat{\bm{x}}_{n+1} \right) \Big\|_2\Delta_{n+1}^{-2} ~(\text{triangle inequality})\\
&\leq \Delta_{n+1}^{-2}\left\| \bm{h}_n \right\|_2
+ \Big\|\Delta_{n+1}^{-1} \left( - A^{-1}\bm{B}_{n+1} + \bm  \epsilon_n - e^{\Delta_{n+1} P}\bm{B}_{n+1} \right) \otimes \bm{x}_n\Big\|_2 +\\
&\hphantom{{}={}}  \Big\|\Delta_{n+1}^{-1} e^{\Delta_{n+1} P} \bm{B}_{n+1} \otimes \left( \bm{x}_n - \hat{\bm{x}}_{n+1} \right) \Big\|_2 ~(\text{Def. of~} \bm Q; e^x\leq 1, \forall x\leq0) \\
&= \Delta_{n+1}^{-2}\left\| \bm{h}_n \right\|_2
+ \left\|\Delta_{n+1}^{-1} \left( - \left(A^{-1}+e^{\Delta_{n+1} P}\right)\bm{B}_{n+1} + \bm  \epsilon_n \right)\otimes \bm{x}_n \right\|_2 + \\
&\hphantom{{}={}} \left\|\Delta_{n+1}^{-1} A^{-1} \bm{B}_{n+1}  \otimes \left( \bm{x}_n - \hat{\bm{x}}_{n+1} \right) \right\|_2 \quad (P = \ln (- A^{-1} )/ \Delta_{n+1})\\
&= \Delta_{n+1}^{-2}\left\| \bm{h}_n \right\|_2
+ \left\| \Delta_{n+1}^{-1} \bm \epsilon_n \otimes \bm{x}_n \right\|_2
+ \left\|\Delta_{n+1}^{-1}  A^{-1} \bm{B}_{n+1} \otimes \left( \bm{x}_n - \hat{\bm{x}}_{n+1} \right) \right\|_2 \\
&= \Delta_{n+1}^{-2}\left\| \bm{h}_n \right\|_2
+ \Delta_{n+1}^{-1}\left\|\bm{x}_n\right\|_2 \left\| \bm \epsilon_n \right\|_2
+ A^{-1}\left\|\bm{B}_{n+1}\right\|_2 \left\| \frac{\bm{x}_n - \hat{\bm{x}}_{n+1}}{\Delta_{n+1}} \right\|_2.
\end{aligned}
\end{equation*}
\end{proof}
\bibliographystyle{ACM-Reference-Format}
\bibliography{sample-base}


\begin{thebibliography}{50}


\ifx \showCODEN    \undefined \def \showCODEN     #1{\unskip}     \fi
\ifx \showDOI      \undefined \def \showDOI       #1{#1}\fi
\ifx \showISBNx    \undefined \def \showISBNx     #1{\unskip}     \fi
\ifx \showISBNxiii \undefined \def \showISBNxiii  #1{\unskip}     \fi
\ifx \showISSN     \undefined \def \showISSN      #1{\unskip}     \fi
\ifx \showLCCN     \undefined \def \showLCCN      #1{\unskip}     \fi
\ifx \shownote     \undefined \def \shownote      #1{#1}          \fi
\ifx \showarticletitle \undefined \def \showarticletitle #1{#1}   \fi
\ifx \showURL      \undefined \def \showURL       {\relax}        \fi
\providecommand\bibfield[2]{#2}
\providecommand\bibinfo[2]{#2}
\providecommand\natexlab[1]{#1}
\providecommand\showeprint[2][]{arXiv:#2}

\bibitem[Chang et~al\mbox{.}(2021)]%
        {chang2021sequential}
\bibfield{author}{\bibinfo{person}{Jianxin Chang}, \bibinfo{person}{Chen Gao}, \bibinfo{person}{Yu Zheng}, \bibinfo{person}{Yiqun Hui}, \bibinfo{person}{Yanan Niu}, \bibinfo{person}{Yang Song}, \bibinfo{person}{Depeng Jin}, {and} \bibinfo{person}{Yong Li}.} \bibinfo{year}{2021}\natexlab{}.
\newblock \showarticletitle{Sequential recommendation with graph neural networks}. In \bibinfo{booktitle}{\emph{Proceedings of the 44th international ACM SIGIR conference on research and development in information retrieval}}. \bibinfo{pages}{378--387}.
\newblock


\bibitem[Dai et~al\mbox{.}(2024)]%
        {dai2024recode}
\bibfield{author}{\bibinfo{person}{Sunhao Dai}, \bibinfo{person}{Changle Qu}, \bibinfo{person}{Sirui Chen}, \bibinfo{person}{Xiao Zhang}, {and} \bibinfo{person}{Jun Xu}.} \bibinfo{year}{2024}\natexlab{}.
\newblock \showarticletitle{Recode: Modeling repeat consumption with neural ode}. In \bibinfo{booktitle}{\emph{Proceedings of the 47th International ACM SIGIR Conference on Research and Development in Information Retrieval}}. \bibinfo{pages}{2599--2603}.
\newblock


\bibitem[Dao and Gu(2024)]%
        {dao2024transformers}
\bibfield{author}{\bibinfo{person}{Tri Dao} {and} \bibinfo{person}{Albert Gu}.} \bibinfo{year}{2024}\natexlab{}.
\newblock \showarticletitle{Transformers are SSMs: Generalized models and efficient algorithms through structured state space duality}.
\newblock \bibinfo{journal}{\emph{arXiv preprint arXiv:2405.21060}} (\bibinfo{year}{2024}).
\newblock


\bibitem[Devlin(2018)]%
        {devlin2018bert}
\bibfield{author}{\bibinfo{person}{Jacob Devlin}.} \bibinfo{year}{2018}\natexlab{}.
\newblock \showarticletitle{Bert: Pre-training of deep bidirectional transformers for language understanding}.
\newblock \bibinfo{journal}{\emph{arXiv preprint arXiv:1810.04805}} (\bibinfo{year}{2018}).
\newblock


\bibitem[Fan et~al\mbox{.}(2024)]%
        {fan2024tim4rec}
\bibfield{author}{\bibinfo{person}{Hao Fan}, \bibinfo{person}{Mengyi Zhu}, \bibinfo{person}{Yanrong Hu}, \bibinfo{person}{Hailin Feng}, \bibinfo{person}{Zhijie He}, \bibinfo{person}{Hongjiu Liu}, {and} \bibinfo{person}{Qingyang Liu}.} \bibinfo{year}{2024}\natexlab{}.
\newblock \showarticletitle{TiM4Rec: An Efficient Sequential Recommendation Model Based on Time-Aware Structured State Space Duality Model}.
\newblock \bibinfo{journal}{\emph{arXiv preprint arXiv:2409.16182}} (\bibinfo{year}{2024}).
\newblock


\bibitem[Fang et~al\mbox{.}(2020)]%
        {fang2020deep}
\bibfield{author}{\bibinfo{person}{Hui Fang}, \bibinfo{person}{Danning Zhang}, \bibinfo{person}{Yiheng Shu}, {and} \bibinfo{person}{Guibing Guo}.} \bibinfo{year}{2020}\natexlab{}.
\newblock \showarticletitle{Deep learning for sequential recommendation: Algorithms, influential factors, and evaluations}.
\newblock \bibinfo{journal}{\emph{ACM Transactions on Information Systems (TOIS)}} \bibinfo{volume}{39}, \bibinfo{number}{1} (\bibinfo{year}{2020}), \bibinfo{pages}{1--42}.
\newblock


\bibitem[Goel et~al\mbox{.}(2022)]%
        {goel2022s}
\bibfield{author}{\bibinfo{person}{Karan Goel}, \bibinfo{person}{Albert Gu}, \bibinfo{person}{Chris Donahue}, {and} \bibinfo{person}{Christopher R{\'e}}.} \bibinfo{year}{2022}\natexlab{}.
\newblock \showarticletitle{It’s raw! audio generation with state-space models}. In \bibinfo{booktitle}{\emph{International Conference on Machine Learning}}. PMLR, \bibinfo{pages}{7616--7633}.
\newblock


\bibitem[Gu and Dao(2023)]%
        {gu2023mamba}
\bibfield{author}{\bibinfo{person}{Albert Gu} {and} \bibinfo{person}{Tri Dao}.} \bibinfo{year}{2023}\natexlab{}.
\newblock \showarticletitle{Mamba: Linear-time sequence modeling with selective state spaces}.
\newblock \bibinfo{journal}{\emph{arXiv preprint arXiv:2312.00752}} (\bibinfo{year}{2023}).
\newblock


\bibitem[Gu et~al\mbox{.}(2021a)]%
        {gu2021efficiently}
\bibfield{author}{\bibinfo{person}{Albert Gu}, \bibinfo{person}{Karan Goel}, {and} \bibinfo{person}{Christopher R{\'e}}.} \bibinfo{year}{2021}\natexlab{a}.
\newblock \showarticletitle{Efficiently modeling long sequences with structured state spaces}.
\newblock \bibinfo{journal}{\emph{arXiv preprint arXiv:2111.00396}} (\bibinfo{year}{2021}).
\newblock


\bibitem[Gu et~al\mbox{.}(2021b)]%
        {gu2021combining}
\bibfield{author}{\bibinfo{person}{Albert Gu}, \bibinfo{person}{Isys Johnson}, \bibinfo{person}{Karan Goel}, \bibinfo{person}{Khaled Saab}, \bibinfo{person}{Tri Dao}, \bibinfo{person}{Atri Rudra}, {and} \bibinfo{person}{Christopher R{\'e}}.} \bibinfo{year}{2021}\natexlab{b}.
\newblock \showarticletitle{Combining recurrent, convolutional, and continuous-time models with linear state space layers}.
\newblock \bibinfo{journal}{\emph{Advances in neural information processing systems}}  \bibinfo{volume}{34} (\bibinfo{year}{2021}), \bibinfo{pages}{572--585}.
\newblock


\bibitem[Hakim et~al\mbox{.}(2023)]%
        {hakim2023clust3}
\bibfield{author}{\bibinfo{person}{Gustavo A~Vargas Hakim}, \bibinfo{person}{David Osowiechi}, \bibinfo{person}{Mehrdad Noori}, \bibinfo{person}{Milad Cheraghalikhani}, \bibinfo{person}{Ali Bahri}, \bibinfo{person}{Ismail Ben~Ayed}, {and} \bibinfo{person}{Christian Desrosiers}.} \bibinfo{year}{2023}\natexlab{}.
\newblock \showarticletitle{Clust3: Information invariant test-time training}. In \bibinfo{booktitle}{\emph{Proceedings of the IEEE/CVF International Conference on Computer Vision}}. \bibinfo{pages}{6136--6145}.
\newblock


\bibitem[Hao et~al\mbox{.}(2021)]%
        {hao2021large}
\bibfield{author}{\bibinfo{person}{Bin Hao}, \bibinfo{person}{Min Zhang}, \bibinfo{person}{Weizhi Ma}, \bibinfo{person}{Shaoyun Shi}, \bibinfo{person}{Xinxing Yu}, \bibinfo{person}{Houzhi Shan}, \bibinfo{person}{Yiqun Liu}, {and} \bibinfo{person}{Shaoping Ma}.} \bibinfo{year}{2021}\natexlab{}.
\newblock \showarticletitle{A large-scale rich context query and recommendation dataset in online knowledge-sharing}.
\newblock \bibinfo{journal}{\emph{arXiv preprint arXiv:2106.06467}} (\bibinfo{year}{2021}).
\newblock


\bibitem[Harper and Konstan(2015)]%
        {harper2015movielens}
\bibfield{author}{\bibinfo{person}{F~Maxwell Harper} {and} \bibinfo{person}{Joseph~A Konstan}.} \bibinfo{year}{2015}\natexlab{}.
\newblock \showarticletitle{The movielens datasets: History and context}.
\newblock \bibinfo{journal}{\emph{Acm transactions on interactive intelligent systems (tiis)}} \bibinfo{volume}{5}, \bibinfo{number}{4} (\bibinfo{year}{2015}), \bibinfo{pages}{1--19}.
\newblock


\bibitem[He et~al\mbox{.}(2022)]%
        {he2022masked}
\bibfield{author}{\bibinfo{person}{Kaiming He}, \bibinfo{person}{Xinlei Chen}, \bibinfo{person}{Saining Xie}, \bibinfo{person}{Yanghao Li}, \bibinfo{person}{Piotr Doll{\'a}r}, {and} \bibinfo{person}{Ross Girshick}.} \bibinfo{year}{2022}\natexlab{}.
\newblock \showarticletitle{Masked autoencoders are scalable vision learners}. In \bibinfo{booktitle}{\emph{Proceedings of the IEEE/CVF conference on computer vision and pattern recognition}}. \bibinfo{pages}{16000--16009}.
\newblock


\bibitem[He and McAuley(2016)]%
        {he2016fusing}
\bibfield{author}{\bibinfo{person}{Ruining He} {and} \bibinfo{person}{Julian McAuley}.} \bibinfo{year}{2016}\natexlab{}.
\newblock \showarticletitle{Fusing similarity models with markov chains for sparse sequential recommendation}. In \bibinfo{booktitle}{\emph{2016 IEEE 16th international conference on data mining (ICDM)}}. IEEE, \bibinfo{pages}{191--200}.
\newblock


\bibitem[Hidasi(2015)]%
        {hidasi2015session}
\bibfield{author}{\bibinfo{person}{B Hidasi}.} \bibinfo{year}{2015}\natexlab{}.
\newblock \showarticletitle{Session-based Recommendations with Recurrent Neural Networks}.
\newblock \bibinfo{journal}{\emph{arXiv preprint arXiv:1511.06939}} (\bibinfo{year}{2015}).
\newblock


\bibitem[Jin et~al\mbox{.}(2022)]%
        {jin2022empowering}
\bibfield{author}{\bibinfo{person}{Wei Jin}, \bibinfo{person}{Tong Zhao}, \bibinfo{person}{Jiayuan Ding}, \bibinfo{person}{Yozen Liu}, \bibinfo{person}{Jiliang Tang}, {and} \bibinfo{person}{Neil Shah}.} \bibinfo{year}{2022}\natexlab{}.
\newblock \showarticletitle{Empowering graph representation learning with test-time graph transformation}.
\newblock \bibinfo{journal}{\emph{arXiv preprint arXiv:2210.03561}} (\bibinfo{year}{2022}).
\newblock


\bibitem[Kang and McAuley(2018)]%
        {kang2018self}
\bibfield{author}{\bibinfo{person}{Wang-Cheng Kang} {and} \bibinfo{person}{Julian McAuley}.} \bibinfo{year}{2018}\natexlab{}.
\newblock \showarticletitle{Self-attentive sequential recommendation}. In \bibinfo{booktitle}{\emph{2018 IEEE international conference on data mining (ICDM)}}. IEEE, \bibinfo{pages}{197--206}.
\newblock


\bibitem[Kingma(2014)]%
        {kingma2014adam}
\bibfield{author}{\bibinfo{person}{Diederik~P Kingma}.} \bibinfo{year}{2014}\natexlab{}.
\newblock \showarticletitle{Adam: A method for stochastic optimization}.
\newblock \bibinfo{journal}{\emph{arXiv preprint arXiv:1412.6980}} (\bibinfo{year}{2014}).
\newblock


\bibitem[Liang et~al\mbox{.}(2024)]%
        {liang2024comprehensive}
\bibfield{author}{\bibinfo{person}{Jian Liang}, \bibinfo{person}{Ran He}, {and} \bibinfo{person}{Tieniu Tan}.} \bibinfo{year}{2024}\natexlab{}.
\newblock \showarticletitle{A comprehensive survey on test-time adaptation under distribution shifts}.
\newblock \bibinfo{journal}{\emph{International Journal of Computer Vision}} (\bibinfo{year}{2024}), \bibinfo{pages}{1--34}.
\newblock


\bibitem[Lipton(2015)]%
        {lipton2015critical}
\bibfield{author}{\bibinfo{person}{Zachary~Chase Lipton}.} \bibinfo{year}{2015}\natexlab{}.
\newblock \showarticletitle{A Critical Review of Recurrent Neural Networks for Sequence Learning}.
\newblock \bibinfo{journal}{\emph{arXiv Preprint, CoRR, abs/1506.00019}} (\bibinfo{year}{2015}).
\newblock


\bibitem[Liu et~al\mbox{.}(2024)]%
        {liu2024mamba4rec}
\bibfield{author}{\bibinfo{person}{Chengkai Liu}, \bibinfo{person}{Jianghao Lin}, \bibinfo{person}{Jianling Wang}, \bibinfo{person}{Hanzhou Liu}, {and} \bibinfo{person}{James Caverlee}.} \bibinfo{year}{2024}\natexlab{}.
\newblock \showarticletitle{Mamba4rec: Towards efficient sequential recommendation with selective state space models}.
\newblock \bibinfo{journal}{\emph{arXiv preprint arXiv:2403.03900}} (\bibinfo{year}{2024}).
\newblock


\bibitem[Liu et~al\mbox{.}(2021b)]%
        {liu2021self}
\bibfield{author}{\bibinfo{person}{Xiao Liu}, \bibinfo{person}{Fanjin Zhang}, \bibinfo{person}{Zhenyu Hou}, \bibinfo{person}{Li Mian}, \bibinfo{person}{Zhaoyu Wang}, \bibinfo{person}{Jing Zhang}, {and} \bibinfo{person}{Jie Tang}.} \bibinfo{year}{2021}\natexlab{b}.
\newblock \showarticletitle{Self-supervised learning: Generative or contrastive}.
\newblock \bibinfo{journal}{\emph{IEEE transactions on knowledge and data engineering}} \bibinfo{volume}{35}, \bibinfo{number}{1} (\bibinfo{year}{2021}), \bibinfo{pages}{857--876}.
\newblock


\bibitem[Liu et~al\mbox{.}(2022)]%
        {liu2022graph}
\bibfield{author}{\bibinfo{person}{Yixin Liu}, \bibinfo{person}{Ming Jin}, \bibinfo{person}{Shirui Pan}, \bibinfo{person}{Chuan Zhou}, \bibinfo{person}{Yu Zheng}, \bibinfo{person}{Feng Xia}, {and} \bibinfo{person}{S~Yu Philip}.} \bibinfo{year}{2022}\natexlab{}.
\newblock \showarticletitle{Graph self-supervised learning: A survey}.
\newblock \bibinfo{journal}{\emph{IEEE transactions on knowledge and data engineering}} \bibinfo{volume}{35}, \bibinfo{number}{6} (\bibinfo{year}{2022}), \bibinfo{pages}{5879--5900}.
\newblock


\bibitem[Liu et~al\mbox{.}(2021a)]%
        {liu2021ttt++}
\bibfield{author}{\bibinfo{person}{Yuejiang Liu}, \bibinfo{person}{Parth Kothari}, \bibinfo{person}{Bastien Van~Delft}, \bibinfo{person}{Baptiste Bellot-Gurlet}, \bibinfo{person}{Taylor Mordan}, {and} \bibinfo{person}{Alexandre Alahi}.} \bibinfo{year}{2021}\natexlab{a}.
\newblock \showarticletitle{Ttt++: When does self-supervised test-time training fail or thrive?}
\newblock \bibinfo{journal}{\emph{Advances in Neural Information Processing Systems}}  \bibinfo{volume}{34} (\bibinfo{year}{2021}), \bibinfo{pages}{21808--21820}.
\newblock


\bibitem[Ni et~al\mbox{.}(2019)]%
        {ni2019justifying}
\bibfield{author}{\bibinfo{person}{Jianmo Ni}, \bibinfo{person}{Jiacheng Li}, {and} \bibinfo{person}{Julian McAuley}.} \bibinfo{year}{2019}\natexlab{}.
\newblock \showarticletitle{Justifying recommendations using distantly-labeled reviews and fine-grained aspects}. In \bibinfo{booktitle}{\emph{Proceedings of the 2019 conference on empirical methods in natural language processing and the 9th international joint conference on natural language processing (EMNLP-IJCNLP)}}. \bibinfo{pages}{188--197}.
\newblock


\bibitem[Ota(2024)]%
        {ota2024decision}
\bibfield{author}{\bibinfo{person}{Toshihiro Ota}.} \bibinfo{year}{2024}\natexlab{}.
\newblock \showarticletitle{Decision mamba: Reinforcement learning via sequence modeling with selective state spaces}.
\newblock \bibinfo{journal}{\emph{arXiv preprint arXiv:2403.19925}} (\bibinfo{year}{2024}).
\newblock


\bibitem[Paszke et~al\mbox{.}(2019)]%
        {paszke2019pytorch}
\bibfield{author}{\bibinfo{person}{Adam Paszke}, \bibinfo{person}{Sam Gross}, \bibinfo{person}{Francisco Massa}, \bibinfo{person}{Adam Lerer}, \bibinfo{person}{James Bradbury}, \bibinfo{person}{Gregory Chanan}, \bibinfo{person}{Trevor Killeen}, \bibinfo{person}{Zeming Lin}, \bibinfo{person}{Natalia Gimelshein}, \bibinfo{person}{Luca Antiga}, {et~al\mbox{.}}} \bibinfo{year}{2019}\natexlab{}.
\newblock \showarticletitle{Pytorch: An imperative style, high-performance deep learning library}.
\newblock \bibinfo{journal}{\emph{Advances in neural information processing systems}}  \bibinfo{volume}{32} (\bibinfo{year}{2019}).
\newblock


\bibitem[Qu et~al\mbox{.}(2024a)]%
        {qu2024budgeted}
\bibfield{author}{\bibinfo{person}{Yunke Qu}, \bibinfo{person}{Tong Chen}, \bibinfo{person}{Quoc Viet~Hung Nguyen}, {and} \bibinfo{person}{Hongzhi Yin}.} \bibinfo{year}{2024}\natexlab{a}.
\newblock \showarticletitle{Budgeted embedding table for recommender systems}. In \bibinfo{booktitle}{\emph{Proceedings of the 17th ACM International Conference on Web Search and Data Mining}}. \bibinfo{pages}{557--566}.
\newblock


\bibitem[Qu et~al\mbox{.}(2024b)]%
        {qu2024scalable}
\bibfield{author}{\bibinfo{person}{Yunke Qu}, \bibinfo{person}{Liang Qu}, \bibinfo{person}{Tong Chen}, \bibinfo{person}{Xiangyu Zhao}, \bibinfo{person}{Quoc Viet~Hung Nguyen}, {and} \bibinfo{person}{Hongzhi Yin}.} \bibinfo{year}{2024}\natexlab{b}.
\newblock \showarticletitle{Scalable Dynamic Embedding Size Search for Streaming Recommendation}. In \bibinfo{booktitle}{\emph{Proceedings of the 33rd ACM International Conference on Information and Knowledge Management}}. \bibinfo{pages}{1941--1950}.
\newblock


\bibitem[Quadrana et~al\mbox{.}(2017)]%
        {quadrana2017personalizing}
\bibfield{author}{\bibinfo{person}{Massimo Quadrana}, \bibinfo{person}{Alexandros Karatzoglou}, \bibinfo{person}{Bal{\'a}zs Hidasi}, {and} \bibinfo{person}{Paolo Cremonesi}.} \bibinfo{year}{2017}\natexlab{}.
\newblock \showarticletitle{Personalizing session-based recommendations with hierarchical recurrent neural networks}. In \bibinfo{booktitle}{\emph{proceedings of the Eleventh ACM Conference on Recommender Systems}}. \bibinfo{pages}{130--137}.
\newblock


\bibitem[Rangapuram et~al\mbox{.}(2018)]%
        {rangapuram2018deep}
\bibfield{author}{\bibinfo{person}{Syama~Sundar Rangapuram}, \bibinfo{person}{Matthias~W Seeger}, \bibinfo{person}{Jan Gasthaus}, \bibinfo{person}{Lorenzo Stella}, \bibinfo{person}{Yuyang Wang}, {and} \bibinfo{person}{Tim Januschowski}.} \bibinfo{year}{2018}\natexlab{}.
\newblock \showarticletitle{Deep state space models for time series forecasting}.
\newblock \bibinfo{journal}{\emph{Advances in neural information processing systems}}  \bibinfo{volume}{31} (\bibinfo{year}{2018}).
\newblock


\bibitem[Schiappa et~al\mbox{.}(2023)]%
        {schiappa2023self}
\bibfield{author}{\bibinfo{person}{Madeline~C Schiappa}, \bibinfo{person}{Yogesh~S Rawat}, {and} \bibinfo{person}{Mubarak Shah}.} \bibinfo{year}{2023}\natexlab{}.
\newblock \showarticletitle{Self-supervised learning for videos: A survey}.
\newblock \bibinfo{journal}{\emph{Comput. Surveys}} \bibinfo{volume}{55}, \bibinfo{number}{13s} (\bibinfo{year}{2023}), \bibinfo{pages}{1--37}.
\newblock


\bibitem[Sun et~al\mbox{.}(2019)]%
        {sun2019bert4rec}
\bibfield{author}{\bibinfo{person}{Fei Sun}, \bibinfo{person}{Jun Liu}, \bibinfo{person}{Jian Wu}, \bibinfo{person}{Changhua Pei}, \bibinfo{person}{Xiao Lin}, \bibinfo{person}{Wenwu Ou}, {and} \bibinfo{person}{Peng Jiang}.} \bibinfo{year}{2019}\natexlab{}.
\newblock \showarticletitle{BERT4Rec: Sequential recommendation with bidirectional encoder representations from transformer}. In \bibinfo{booktitle}{\emph{Proceedings of the 28th ACM international conference on information and knowledge management}}. \bibinfo{pages}{1441--1450}.
\newblock


\bibitem[Sun et~al\mbox{.}(2020)]%
        {sun2020test}
\bibfield{author}{\bibinfo{person}{Yu Sun}, \bibinfo{person}{Xiaolong Wang}, \bibinfo{person}{Zhuang Liu}, \bibinfo{person}{John Miller}, \bibinfo{person}{Alexei Efros}, {and} \bibinfo{person}{Moritz Hardt}.} \bibinfo{year}{2020}\natexlab{}.
\newblock \showarticletitle{Test-time training with self-supervision for generalization under distribution shifts}. In \bibinfo{booktitle}{\emph{International conference on machine learning}}. PMLR, \bibinfo{pages}{9229--9248}.
\newblock


\bibitem[Tang and Wang(2018)]%
        {tang2018personalized}
\bibfield{author}{\bibinfo{person}{Jiaxi Tang} {and} \bibinfo{person}{Ke Wang}.} \bibinfo{year}{2018}\natexlab{}.
\newblock \showarticletitle{Personalized top-n sequential recommendation via convolutional sequence embedding}. In \bibinfo{booktitle}{\emph{Proceedings of the eleventh ACM international conference on web search and data mining}}. \bibinfo{pages}{565--573}.
\newblock


\bibitem[Vaswani(2017)]%
        {vaswani2017attention}
\bibfield{author}{\bibinfo{person}{A Vaswani}.} \bibinfo{year}{2017}\natexlab{}.
\newblock \showarticletitle{Attention is all you need}.
\newblock \bibinfo{journal}{\emph{Advances in Neural Information Processing Systems}} (\bibinfo{year}{2017}).
\newblock


\bibitem[Wang et~al\mbox{.}(2020)]%
        {wang2020tent}
\bibfield{author}{\bibinfo{person}{Dequan Wang}, \bibinfo{person}{Evan Shelhamer}, \bibinfo{person}{Shaoteng Liu}, \bibinfo{person}{Bruno Olshausen}, {and} \bibinfo{person}{Trevor Darrell}.} \bibinfo{year}{2020}\natexlab{}.
\newblock \showarticletitle{Tent: Fully test-time adaptation by entropy minimization}.
\newblock \bibinfo{journal}{\emph{arXiv preprint arXiv:2006.10726}} (\bibinfo{year}{2020}).
\newblock


\bibitem[Wang et~al\mbox{.}(2022)]%
        {wang2022test}
\bibfield{author}{\bibinfo{person}{Yiqi Wang}, \bibinfo{person}{Chaozhuo Li}, \bibinfo{person}{Wei Jin}, \bibinfo{person}{Rui Li}, \bibinfo{person}{Jianan Zhao}, \bibinfo{person}{Jiliang Tang}, {and} \bibinfo{person}{Xing Xie}.} \bibinfo{year}{2022}\natexlab{}.
\newblock \showarticletitle{Test-time training for graph neural networks}.
\newblock \bibinfo{journal}{\emph{arXiv preprint arXiv:2210.08813}} (\bibinfo{year}{2022}).
\newblock


\bibitem[Wang et~al\mbox{.}(2024)]%
        {wang2024not}
\bibfield{author}{\bibinfo{person}{Yuan Wang}, \bibinfo{person}{Zhiyu Li}, \bibinfo{person}{Changshuo Zhang}, \bibinfo{person}{Sirui Chen}, \bibinfo{person}{Xiao Zhang}, \bibinfo{person}{Jun Xu}, {and} \bibinfo{person}{Quan Lin}.} \bibinfo{year}{2024}\natexlab{}.
\newblock \showarticletitle{Do Not Wait: Learning Re-Ranking Model Without User Feedback At Serving Time in E-Commerce}. In \bibinfo{booktitle}{\emph{Proceedings of the 18th ACM Conference on Recommender Systems}}. \bibinfo{pages}{896--901}.
\newblock


\bibitem[Xie et~al\mbox{.}(2022)]%
        {xie2022contrastive}
\bibfield{author}{\bibinfo{person}{Xu Xie}, \bibinfo{person}{Fei Sun}, \bibinfo{person}{Zhaoyang Liu}, \bibinfo{person}{Shiwen Wu}, \bibinfo{person}{Jinyang Gao}, \bibinfo{person}{Jiandong Zhang}, \bibinfo{person}{Bolin Ding}, {and} \bibinfo{person}{Bin Cui}.} \bibinfo{year}{2022}\natexlab{}.
\newblock \showarticletitle{Contrastive learning for sequential recommendation}. In \bibinfo{booktitle}{\emph{2022 IEEE 38th international conference on data engineering (ICDE)}}. IEEE, \bibinfo{pages}{1259--1273}.
\newblock


\bibitem[Xu et~al\mbox{.}(2024)]%
        {xu2024survey}
\bibfield{author}{\bibinfo{person}{Rui Xu}, \bibinfo{person}{Shu Yang}, \bibinfo{person}{Yihui Wang}, \bibinfo{person}{Bo Du}, {and} \bibinfo{person}{Hao Chen}.} \bibinfo{year}{2024}\natexlab{}.
\newblock \showarticletitle{A survey on vision mamba: Models, applications and challenges}.
\newblock \bibinfo{journal}{\emph{arXiv preprint arXiv:2404.18861}} (\bibinfo{year}{2024}).
\newblock


\bibitem[Yang et~al\mbox{.}(2024b)]%
        {yang2024dual}
\bibfield{author}{\bibinfo{person}{Xihong Yang}, \bibinfo{person}{Yiqi Wang}, \bibinfo{person}{Jin Chen}, \bibinfo{person}{Wenqi Fan}, \bibinfo{person}{Xiangyu Zhao}, \bibinfo{person}{En Zhu}, \bibinfo{person}{Xinwang Liu}, {and} \bibinfo{person}{Defu Lian}.} \bibinfo{year}{2024}\natexlab{b}.
\newblock \showarticletitle{Dual test-time training for out-of-distribution recommender system}.
\newblock \bibinfo{journal}{\emph{arXiv preprint arXiv:2407.15620}} (\bibinfo{year}{2024}).
\newblock


\bibitem[Yang et~al\mbox{.}(2024c)]%
        {yang2024mixed}
\bibfield{author}{\bibinfo{person}{Xihong Yang}, \bibinfo{person}{Yiqi Wang}, \bibinfo{person}{Yue Liu}, \bibinfo{person}{Yi Wen}, \bibinfo{person}{Lingyuan Meng}, \bibinfo{person}{Sihang Zhou}, \bibinfo{person}{Xinwang Liu}, {and} \bibinfo{person}{En Zhu}.} \bibinfo{year}{2024}\natexlab{c}.
\newblock \showarticletitle{Mixed graph contrastive network for semi-supervised node classification}.
\newblock \bibinfo{journal}{\emph{ACM Transactions on Knowledge Discovery from Data}} (\bibinfo{year}{2024}).
\newblock


\bibitem[Yang et~al\mbox{.}(2024a)]%
        {yang2024ttt4rec}
\bibfield{author}{\bibinfo{person}{Zhaoqi Yang}, \bibinfo{person}{Yanan Wang}, {and} \bibinfo{person}{Yong Ge}.} \bibinfo{year}{2024}\natexlab{a}.
\newblock \showarticletitle{TTT4Rec: A Test-Time Training Approach for Rapid Adaption in Sequential Recommendation}.
\newblock \bibinfo{journal}{\emph{arXiv preprint arXiv:2409.19142}} (\bibinfo{year}{2024}).
\newblock


\bibitem[Zhang et~al\mbox{.}(2024a)]%
        {zhang2024reinforcing}
\bibfield{author}{\bibinfo{person}{Changshuo Zhang}, \bibinfo{person}{Sirui Chen}, \bibinfo{person}{Xiao Zhang}, \bibinfo{person}{Sunhao Dai}, \bibinfo{person}{Weijie Yu}, {and} \bibinfo{person}{Jun Xu}.} \bibinfo{year}{2024}\natexlab{a}.
\newblock \showarticletitle{Reinforcing Long-Term Performance in Recommender Systems with User-Oriented Exploration Policy}. In \bibinfo{booktitle}{\emph{Proceedings of the 47th International ACM SIGIR Conference on Research and Development in Information Retrieval}}. \bibinfo{pages}{1850--1860}.
\newblock


\bibitem[Zhang et~al\mbox{.}(2024c)]%
        {zhang2024qagcf}
\bibfield{author}{\bibinfo{person}{Changshuo Zhang}, \bibinfo{person}{Teng Shi}, \bibinfo{person}{Xiao Zhang}, \bibinfo{person}{Yanping Zheng}, \bibinfo{person}{Ruobing Xie}, \bibinfo{person}{Qi Liu}, \bibinfo{person}{Jun Xu}, {and} \bibinfo{person}{Ji-Rong Wen}.} \bibinfo{year}{2024}\natexlab{c}.
\newblock \showarticletitle{QAGCF: Graph Collaborative Filtering for Q\&A Recommendation}.
\newblock \bibinfo{journal}{\emph{arXiv preprint arXiv:2406.04828}} (\bibinfo{year}{2024}).
\newblock


\bibitem[Zhang et~al\mbox{.}(2024b)]%
        {zhang2024saqrec}
\bibfield{author}{\bibinfo{person}{Kepu Zhang}, \bibinfo{person}{Teng Shi}, \bibinfo{person}{Sunhao Dai}, \bibinfo{person}{Xiao Zhang}, \bibinfo{person}{Yinfeng Li}, \bibinfo{person}{Jing Lu}, \bibinfo{person}{Xiaoxue Zang}, \bibinfo{person}{Yang Song}, {and} \bibinfo{person}{Jun Xu}.} \bibinfo{year}{2024}\natexlab{b}.
\newblock \showarticletitle{SAQRec: Aligning Recommender Systems to User Satisfaction via Questionnaire Feedback}. In \bibinfo{booktitle}{\emph{Proceedings of the 33rd ACM International Conference on Information and Knowledge Management}}. \bibinfo{pages}{3165--3175}.
\newblock


\bibitem[Zhao et~al\mbox{.}(2021)]%
        {zhao2021recbole}
\bibfield{author}{\bibinfo{person}{Wayne~Xin Zhao}, \bibinfo{person}{Shanlei Mu}, \bibinfo{person}{Yupeng Hou}, \bibinfo{person}{Zihan Lin}, \bibinfo{person}{Yushuo Chen}, \bibinfo{person}{Xingyu Pan}, \bibinfo{person}{Kaiyuan Li}, \bibinfo{person}{Yujie Lu}, \bibinfo{person}{Hui Wang}, \bibinfo{person}{Changxin Tian}, {et~al\mbox{.}}} \bibinfo{year}{2021}\natexlab{}.
\newblock \showarticletitle{Recbole: Towards a unified, comprehensive and efficient framework for recommendation algorithms}. In \bibinfo{booktitle}{\emph{proceedings of the 30th acm international conference on information \& knowledge management}}. \bibinfo{pages}{4653--4664}.
\newblock


\bibitem[Zhou et~al\mbox{.}(2022)]%
        {zhou2022filter}
\bibfield{author}{\bibinfo{person}{Kun Zhou}, \bibinfo{person}{Hui Yu}, \bibinfo{person}{Wayne~Xin Zhao}, {and} \bibinfo{person}{Ji-Rong Wen}.} \bibinfo{year}{2022}\natexlab{}.
\newblock \showarticletitle{Filter-enhanced MLP is all you need for sequential recommendation}. In \bibinfo{booktitle}{\emph{Proceedings of the ACM web conference 2022}}. \bibinfo{pages}{2388--2399}.
\newblock


\end{thebibliography}

\end{document}